\newtheorem{theorem}{Theorem}[section]
\newtheorem{conj}{Conjecture}[section]
\newcommand{\Weber}{\operatorname{\mathit{W\kern-.30em e}}}
\newcommand{\Rey}{\operatorname{\mathit{R\kern-.20em e}}}
\newcommand{\Capil}{\operatorname{\mathit{C}}}
\newsavebox{\@brx}
\newcommand{\llangle}[1][]{\savebox{\@brx}{\(\m@th{#1\langle}\)}%
  \mathopen{\copy\@brx\mkern2mu\kern-0.9\wd\@brx\usebox{\@brx}}}
\newcommand{\rrangle}[1][]{\savebox{\@brx}{\(\m@th{#1\rangle}\)}%
  \mathclose{\copy\@brx\mkern2mu\kern-0.9\wd\@brx\usebox{\@brx}}}
\title{Optimal control of thin liquid films and transverse mode effects}
\begin{document}

\author{R. J. Tomlin, S. N. Gomes, G. A. Pavliotis
\and
  D. T. Papageorgiou
}

\maketitle

\begin{abstract}
We consider the control of a three-dimensional thin liquid film on a flat substrate, inclined at a non-zero angle to the horizontal. Controls are applied via same-fluid blowing and suction through the substrate surface. We consider both overlying and hanging films, where the liquid lies above or below the substrate, respectively. We study the weakly nonlinear evolution of the system, which is governed by a forced Kuramoto--Sivashinsky equation in two space dimensions. The uncontrolled problem exhibits three ranges of dynamics depending on the incline of the substrate: stable flat film solution, bounded chaotic dynamics, or unbounded exponential growth of unstable transverse modes. We proceed with the assumption that we may actuate at every point on the substrate. The main focus is the optimal control problem, which we first study in the special case that the forcing may only vary in the spanwise direction. The structure of the Kuramoto--Sivashinsky equation allows the explicit construction of optimal controls in this case using the classical theory of linear quadratic regulators. Such controls are employed to prevent the exponential growth of transverse waves in the case of a hanging film, revealing complex dynamics for the streamwise and mixed modes. We then consider the optimal control problem in full generality, and prove the existence of an optimal control. For numerical simulations, we employ an iterative gradient descent algorithm. In the final section, we consider the effects of transverse mode forcing on the chaotic dynamics present in the streamwise and mixed modes for the case of a vertical film flow. Coupling through nonlinearity allows us to reduce the average energy in solutions without directly forcing the linearly unstable dominant modes.
\end{abstract}

\section{Introduction}

The dynamics of thin liquid films in various physical situations form a range of classical and important problems in fluid mechanics, attracting a lot of attention from researchers in the field (see \cite{kalliadasis2011falling,craster2009dynamics} and the references therein). In this work, we study the optimal control problem for a gravity-driven, thin, viscous liquid film on an inclined flat substrate, where controls are applied at the substrate surface by means of same-fluid blowing and suction. We allow the fluid to be either overlying or hanging, where the film lies above or below the substrate, respectively. Falling liquid films have a wide range of industrial applications, including coating processes \cite{doi:10.1146/annurev.fl.17.010185.000433,doi:10.1146/annurev.fluid.36.050802.122124}, and heat and mass transfer \cite{FRISK19721537,shmerler1986effects,LYU19911451,Miyara1999,serifi2004transient}. For coating processes, a stable film with a flat interface is desirable, whereas heat (and mass) transfer is improved by the presence of interfacial deformations; this is due to increased surface area, the presence of recirculation regions in the wave crests, and effective film thinning where the heat transfer becomes almost purely conductive. Thus, it is desirable to know how to control a liquid film to have a desired predetermined interface.

In the absence of controls, the system possesses an exact flat film solution \cite{nusselt1} (known as the Nusselt solution) with a semi-parabolic velocity profile in the streamwise direction. The experimental work of Kapitza and Kapitza \cite{kapitza1949wave} displayed the vast range of dynamical behaviours that could be obtained with flows of this kind, and promoted analytical and numerical studies of the problem. For an overlying film, Yih \cite{yih1955proceedings,Yih1} and Benjamin \cite{FLM:367246} considered the linear stability of the exact Nusselt solution, and showed that the film is unstable to long waves beyond a critical Reynolds number which depends on the angle of inclination -- for vertical arrangements, the critical Reynolds number is zero. Starting with the full Navier--Stokes equations, families of reduced-order models may be constructed to simplify the problem both analytically and numerically, with the aim to capture the evolution of the film interface. With a long-wave assumption and formal asymptotics, a Benney equation \cite{Benney,gjevik1970occurrence} for the interface height may be constructed for Reynolds numbers close to critical -- such equations usually retain the effects of gravity, viscosity and surface tension. These highly nonlinear models have been studied extensively and generalised by a number of authors \cite{lin1969finite,roskes1970three,nakaya1975long,atherton1976derivation}, however lack global existence of solutions and finite-time blow-ups have been observed in numerical simulations \cite{pumir1983solitary,joo1991long,rosenau1992bounded}. Although including many physical mechanisms, Benney equations are not effective at modelling thin film flows at higher Reynolds numbers. Coupled systems of evolution equations for the interface height and fluid flux can be derived as an alternative, notably the integral boundary layer formulation of Kapitza \cite{kapitza1948wave} and Shkadov \cite{shkadov1967wave}, and the weighted residual models \cite{ruyer1998modeling,ruyer2000improved,scheid2006wave}.

The lowest rung in the hierarchy of models is occupied by the weakly nonlinear evolution equations, and for the thin film flow problem under consideration these models are relatives of the Kuramoto--Sivashinsky equation (KSE); these models govern the dynamics of small perturbations to the flat interface solution. The classical KSE in one space dimension is written
\begin{equation}\label{KSEintro}\eta_t + \eta\eta_x + \eta_{xx} + \eta_{xxxx} = 0,\end{equation}
and is usually studied with periodic boundary conditions on the interval $[0,L]$. This partial differential equation (PDE) exhibits a range of dynamical behaviours in windows of the bifurcation parameter $L$, including steady and travelling waves, time-periodic and quasi-periodic solutions, and full spatiotemporal chaos \cite{kevrekidis1990back,depapa1,depapa2}. It has been observed that \eqref{KSEintro} possesses a finite-dimensional global attractor \cite{goodm1}, and numerical results suggest a finite energy density in the limit as the periodicity $L$ becomes large; the stronger result that the $L^{\infty}$-norm of solutions at large times is bounded uniformly as $L\rightarrow \infty$ is seen (this was proved for the steady problem \cite{Michelson1}). A number of authors have considered this question analytically \cite{NST1,goodm1,CEES1, jolly2000, bronski, otto1,otto2, ottojosien}, yet the best result is far from agreeing with the optimal numerical bound. An equipartition of energy was observed for solutions \cite{pompelce, witten, Tohequip}, with a flat range in the power spectrum for the long wavelength modes, rising to a peak at the most active mode, and then decaying exponentially for the high frequencies due to strong dissipation on small scales. Additionally, Collet et al. \cite{collet1993analyticity} showed that solutions become instantly analytic, as is observed numerically in the decay of the Fourier coefficients. A KSE may be obtained in the thin film flow context from Benney models by seeking the evolution of a small perturbation to the flat film solution. Weakly nonlinear models have been derived for the three-dimensional (3D) film problem under consideration yielding evolution equations in two spatial dimensions \cite{Nepo1, Nepo2, topper1978approximate,PhysRevE.60.4143} similar to \eqref{KSEintro} -- such an equation is the main focus of this work. Variants of \eqref{KSEintro} arise in core-annular flows \cite{papageorgiou1990nonlinear} or fibre coating problems \cite{duprat2009liquid}.

Modelling thin liquid films with the 2D Navier--Stokes equations yields evolution equations that overlook transverse and mixed mode phenomena which in some situations dominate the interface dynamics. For the overlying film problem, an initially 2D flow is seen to transition to 3D waves in both experiments and numerical simulations \cite{liu1995three,park2003three,kofman2014three,kharlamov2015transition}. Hanging film flows (often referred to as film flows on inverted substrates) are less well understood, with few experimental studies. In this case, the linear theory from the 3D formulation predicts a Rayleigh--Taylor (RT) instability in the transverse modes, independent of Reynolds number. This instability leads to the formation of rivulet structures which were observed in experiments by Charogiannis and Markides \cite{markidesexperimental}. They also observed complicated pulse dynamics in the streamwise direction on the crests of the rivulets, although their parameters were not so extreme that dripping took place. In \cite{Brun1}, the authors considered Stokes flow on an inverted substrate, deriving a Benney equation for the streamwise interface evolution (no transverse dynamics) and performed experiments. They found that the fluid parameters for which dripping occurred coincided with the parameters for which their Benney model exhibited absolute instability where small perturbations grow locally (as opposed to convective instability where perturbations are convected with the flow). It can be surmised from these results that the dripping of a hanging film can be broken down into two instabilities. Firstly, the rivulet structures form due to a transverse RT instability. Then, depending on the absolute or convective nature of the streamwise instability (which is well modelled by a 1D interface equation since the rivulets are thin in the spanwise direction), dripping may occur. In a related study, Lin and Kondic \cite{lin2012thin} considered the case of a non-wetted substrate with numerical simulations of a Benney equation, and observed that fluid fronts were unstable to a transverse fingering instability. Thin rivulets form with approximately equal width in the spanwise direction, with fast moving ``drop-like" waves appearing on the rivulets as observed in the wetted case. However, the authors do not attribute the fingering instability to be of the RT-type. The controls considered in the current work attempt to prevent dripping of liquid films by averting the initial rivulet formation at the weakly nonlinear level.

There are a number of ways to influence the interfacial dynamics of thin liquid films. Spatially varying topography may be utilised to create patterned steady states, but does not have a considerable effect on the stability of solutions
\cite{pozrikidis_1988,gaskell2004gravity,tseluiko_blyth_papageorgiou_2013,doi:10.1063/1.3211289,doi:10.1063/1.4790434,doi:10.1063/1.4917026}. The open-loop controls used in such studies are steady since the topography is fixed. The effect of an electric field which is normal or parallel to the substrate has also been studied extensively -- the former has a destabilising effect on the flow \cite{tseluiko2006wave, tomlin_papageorgiou_pavliotis_2017}. The film flow problem has also been considered with the addition of heating at the wall surface which is not necessarily uniform \cite{scheid2002nonlinear,kalliadasis2003marangoni,skotheim2003instability,kalliadasis2003thermocapillary}, and even in conjunction with substrate topography \cite{Blyth4067}. Other possibilities for influencing the dynamics of thin films include the introduction of magnetic fields \cite{PhysRevE.88.023028}, surfactants \cite{blyth_pozrikidis_2004}, and substrate microstructure or coatings to induce effective slip \cite{KALPATHY2013212}. In the current work, we focus on the control of the film interface using same-fluid blowing and suction controls at the substrate surface. For the 2D simplification of the film flow problem, the weakly nonlinear evolution of the thin film with blowing and suction controls can be modelled by the 1D KSE \eqref{KSEintro} with the introduction of some non-zero right hand side $\zeta(x,t)$ (this is not the case for the majority of the above control methodologies). The optimal control of this equation (in fact a generalisation with dispersive and electric field effects) was considered by Gomes et al. \cite{gomes2016stabilizing}. For their numerical experiments, the authors consider the optimisation of point-actuator locations (given an initial condition) to stabilise a unstable travelling wave solutions. This is in contrast to the current work, where our controls are more regular in space and actuation is not restricted to a finite set of points. An alternative to same-fluid controls at the substrate surface is air-blowing and suction controls via actuators on a plate which is parallel to the substrate on which the fluid lies \cite{park2016experimental}.

The main objective of the current paper is to present the theory and numerical experiments for the optimal control of the KSE in two space dimensions,
\begin{equation} \label{controlled2dksintro} 
\eta_t + \eta \eta_x + (1- \kappa) \eta_{xx}  -  \kappa  \eta_{yy} + \Delta^2 \eta  =  \zeta.
\end{equation}
Here $\eta$ is a zero-average perturbation of the flat film solution, and $\kappa > 0$ $(<0)$ means that the film is overlying (hanging). The inertial and gravitational effects are manifest in the second derivative terms, and the bi-Laplacian term containing mixed derivatives corresponds to surface tension effects. The function $\zeta$ is a control which corresponds to blowing and suction through the substrate surface. In this paper, we consider only passive open-loop control systems, research on active control (closed-loop feedback control) problems for \eqref{controlled2dksintro} is underway by the authors. The current paper is organised as follows: In section \ref{PhysicalModel}, we present the physical problem, and discuss the hierarchy of models culminating with \eqref{controlled2dksintro}. In section \ref{opttranssubsec1}, we consider purely transverse controls, $\zeta = \tilde{\zeta}(y,t)$. Seeking an optimal control of this form reduces to a linear quadratic regulator (LQR) problem for each transverse Fourier mode. The resulting optimal control is used to suppress the exponential growth of the purely transverse modes for a hanging film set-up. In section \ref{FOCref}, we discuss both the analytical and numerical aspects of the full optimal control problem, where $\zeta \equiv \zeta(x,y,t)$. Existence of an optimal control is proven, and a forward--backward sweeping method is employed for numerical experiments. Finally, in section \ref{TranwaveeffectsSec}, we study the impact of transverse modes in the dynamics of the streamwise flow evolution. This study cannot be categorised as an optimal control problem, however, our findings are linked to results in section \ref{FOCref}, and may lead to new control strategies for flows with a dominant direction. Our conclusions and a discussion are presented in section \ref{CONC1}.

\section{Physical problem and hierarchy of models\label{PhysicalModel}}

\subsection{Physical problem and governing equations}

\begin{figure}
\begin{tikzpicture}[scale=1.40]

    \draw (5,0) coordinate (A1) -- (1,3) coordinate (A2);
    \draw (9,0.5) coordinate (A3) -- (5,3.5) coordinate (A4);
    \draw (A1) -- (A3); \draw (A2) -- (A4);
    \fill[gray,opacity=0.30] (A1) -- (A2) -- (A4) -- (A3) -- cycle;
    
             \def\mypath{ (A1) -- (A2) -- (A4) -- (A3) -- (A1) }

%
%
%
%
%
            

    \draw[dashed] (A1) -- (4,0.35) node[above] {$\theta$};
    \draw (4.25,0.55) arc (135:175:0.4);

  \draw[-] (-0.45, 4.1) node[right] {$z=h(x,y,t)$};
    
  \draw[-] (0.2, 3) node[right] {$z=0$};

    \draw[dashed, opacity=0.0] (5.2,1.466-0.35) coordinate (B1) -- (1.2,4.466-0.35) coordinate (B2);
    \draw[dashed, opacity=0.0] (9.2,1.966-0.35) coordinate (B3) -- (5.2,4.966-0.35) coordinate (B4);
    \draw[dashed, opacity=0.0] (B1) -- (B3); 
    \draw[dashed, opacity=0.0] (B2) -- (B4);

    \draw[->] (9.2, 2) -- (9.7,2.0625) node[right] {$y$};
    \draw[->] (9.2, 2) -- (9.7, 1.625) node[right] {$x$};
    \draw[->] (9.2, 2) -- (9.4, 2.466) node[above] {$z$};
    
     \draw[->] (8.4, 3.7) -- (8.4, 3.1);
     
      \draw[-] (8.4, 3.4) node[right] {$\bm{g}$};

   \filldraw[gray,opacity=0.3, very thick]                    (B1)
  decorate [decoration={snake,amplitude=0.12cm,segment length=0.911cm}]             { -- (B2) }
  decorate [decoration={snake,amplitude=0.04cm,segment length=1.5cm}] 		{ -- (B4) }
  decorate [decoration={snake,amplitude=0.12cm,segment length=0.911cm}]           { -- (B3) }
   decorate [decoration={snake,amplitude=0.04cm,segment length=1.5cm}]           { -- (B1) }
      ;
    
    \pgftransformcm{0.2857}{0.0357}{0.2857}{-0.2143}{\pgfpoint{20}{0}}
    \foreach \x in {0,1,...,27}{
      \foreach \y in {0,1,...,27}{
        \node[draw,circle,inner sep=0.001pt,fill] at (\x*0.5+13.20,\y*0.5-11.65) {};
      }
    }

\end{tikzpicture}\caption{Schematic of the problem.} \label{Setupdiagram}
\end{figure}
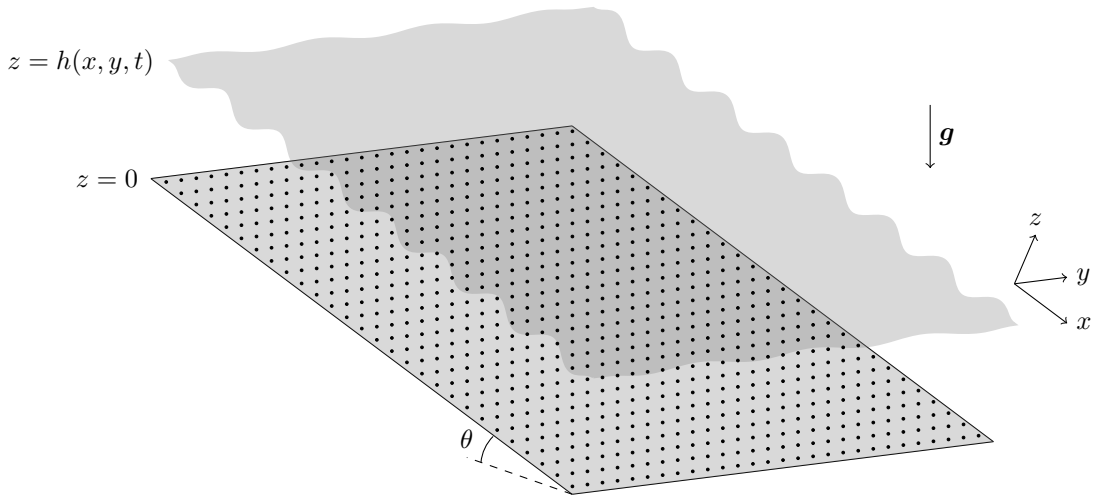

We consider a Newtonian fluid with constant density $\rho$, dynamic viscosity $\mu$, and kinematic viscosity $\nu$, flowing under gravity along a flat infinite 2D substrate inclined at a non-zero angle $\theta$ to the horizontal -- the schematic of the problem is presented in Figure \ref{Setupdiagram}. Same-fluid blowing and suction controls are applied at the substrate surface; the forcing is assumed to be purely perpendicular to the substrate and is introduced into the problem through modifying the impermeability condition at the substrate, but not modifying the no-slip condition. The actuator locations are assumed to be close together so that we may assume that blowing or suction can be performed at any point on the substrate surface (the corresponding study of point-actuated control by the authors will be presented elsewhere). Initially, this presents itself as a boundary control problem, however it manifests itself as a distributed control problem for the dynamics of the fluid--air interface as we will show. We use coordinates $(x,y,z)$ which are fixed in the plane, with $x$ directed in the streamwise direction, $y$ in the spanwise direction, and $z$ perpendicular to the substrate, as shown for the case of an overlying film in the schematic of Figure \ref{Setupdiagram}. Note that, as $\theta$ increases, the substrate and axes rotate; we have $\theta \in (0, \pi/2)$ for overlying films, $\theta \in (\pi/2,\pi)$ for hanging films, and the vertical film flow with $\theta = \pi/2$. The surface tension coefficient between the liquid and the surrounding hydrodynamically passive medium is denoted by $\sigma$ (assumed constant), the local film thickness is denoted by $h(x,y,t)$, a function of space and time, with unperturbed thickness $\ell$, and the acceleration due to gravity is denoted by $\bm{g} = (g\sin\theta,0,-g\cos\theta)$. In the case when no blowing or suction is applied, the Navier--Stokes equations (with no-slip and impermeability boundary conditions at the substrate, and the kinematic condition and stress balances at the interface) admit an exact Nusselt solution \cite{nusselt1,FLM:367246} with a film of uniform thickness, i.e. $h=\ell$, and a streamwise velocity profile which is semi-parabolic in $z$. Velocities and controls are rescaled with the base velocity of this exact solution at the free surface, $U_0 = g\ell^2\sin\theta/2\nu$, the lengths scale with $\ell$, and we take the viscous scaling for the pressure. We use the non-dimensional parameters
\begin{equation}\label{dimensionlessparameters} {\Rey} = \frac{U_0 \ell}{\nu} = \frac{g \ell^3\sin\theta}{2\nu^2}, \qquad {\Capil} = \frac{U_0 \mu}{\sigma} = \frac{ \rho g \ell^2\sin\theta}{2\sigma},\end{equation}
where the Reynolds number ${\Rey}$ measures the ratio of inertial to viscous forces, and the capillary number ${\Capil}$ measures the ratio of surface tension to viscous forces. The non-dimensional Navier--Stokes equations for this problem are
\begin{subequations}
\begin{align}{\Rey} \left( \bm{u}_t + (\bm{u} \bm{\cdot} \bm{\nabla} )\bm{u} \right)= & - \bm{\nabla} p  + {\nabla}^2 \bm{u} +  2 \bm{\overline{g}} ,\\
\bm{\nabla}\bm{\cdot}\bm{u} = & \; 0,
\end{align}
\end{subequations}
where $\bm{u}= (u,v,w)$ is the velocity field, $p$ is the pressure, $\bm{\nabla}$ is the 3D spatial gradient operator, ${\nabla}^2 = \bm{\nabla} \bm{\cdot} \bm{\nabla}$ is the 3D Laplacian operator, and $\bm{\overline{g}} = (1,0,-\cot\theta)$ is the non-dimensional gravitational forcing. We have no-slip conditions at the solid substrate surface, $u|_{z=0} = v|_{z=0}= 0$, and the impermeability condition is modified to account for the controls as $w|_{z=0} = f(x,y,t)$. The non-dimensional kinematic condition and balance of stresses in the two tangential directions and normal direction at the interface are omitted for brevity, but are given (with an additional electric field term) in \cite{tomlin_papageorgiou_pavliotis_2017}.

Actuation through flow boundaries (as is considered in this work) for the Navier--Stokes equations has been considered in the optimal control of turbulent channel flows by Bewley and Moin \cite{bewley1994optimal}, and in 2D with wall slip by Chemetov and Cipriano \cite{doi:10.1137/17M1121196}. Existence and uniqueness results for optimal controls in the case of body forcing controls (where $f$ appears on the right hand side of the Navier--Stokes equations) are given in \cite{BEWLEY20001007,Abergel1990}. Additionally, optimisation of wall-parallel velocities for cavity driven flows have been considered, as well as other problems related to optimisation of fluid mixing or thermal convection.

\subsection{Fully nonlinear Benney equation} 
In order to reduce the Navier--Stokes formulation to a forced evolution equation for the film thickness, we make a long-wave assumption. 
The details of the following derivation are omitted since they are similar to those in \cite{tomlin_papageorgiou_pavliotis_2017} for the case of a fully 3D model of an electrified thin film, and the derivation for the 2D case is provided by \cite{thompson2016falling}. We assume that the typical interfacial deformation wavelengths $\lambda$ are large compared to the unperturbed thickness $\ell$, set $\delta = \ell /\lambda \ll 1$, and introduce the change of variables
\begin{equation}\label{longwavereg1changeofvars} 
x = \frac{1}{\delta} \widehat{x},\qquad y = \frac{1}{\delta} \widehat{y}, \qquad t = \frac{1}{\delta} \widehat{t},\qquad w = \delta \widehat{w},\qquad f = \delta \widehat{f}, \qquad {\Capil} = \delta^2 \overline{{\Capil}},
\end{equation}
where hats denote $O(1)$ quantities, and the capillary number is rescaled in order to retain the effects of surface tension in the leading order dynamics, with $\overline{\Capil} = O(1)$. We also assume that the Reynolds number ${\Rey}$ is an $O(1)$ quantity. The change of variables (\ref{longwavereg1changeofvars}) is substituted into the governing equations and the hats are dropped. Then, with a systematic asymptotics procedure, in which the flow field is substituted into the kinematic equation, we obtain a fully nonlinear Benney equation in two spatial dimensions, with errors of $O(\delta^2)$,
\begin{equation} \label{benneyqequation1}  H_t + \bm{\nabla} \bm{\cdot} \left[ \left( \frac{2}{3} H^3 + \frac{8{\Rey}}{15} \delta  H^6 H_x - \frac{2{\Rey}}{3} \delta H^4 F \right) \bm{e}_x  -  \frac{2}{3} \delta H^3 \bm{\nabla} \left( H \cot\theta - \frac{1}{2\overline{{\Capil}}} \Delta H \right) \right] = F,  \end{equation}
where we have redefined $\bm{\nabla} = (\partial_x, \partial_y)$, $\bm{e}_x = (1,0)$, and $\Delta\equiv \partial_x^2+\partial_y^2$ is the 2D Laplacian operator. The variables $H$ and $F$ are approximations of the interface height $h$ and given control $f$, respectively, correct to $O(\delta)$. Notice that $F$ is not only on the right hand side of \eqref{benneyqequation1}, but appears in a nonlinear term in the streamwise direction at $O(\delta)$. We have assumed that $F_t = O(1)$ in the current variables, otherwise a term involving $F_t$ is promoted from the $O(\delta^2)$ error to the $O(\delta)$ terms of \eqref{benneyqequation1}. As observed by a number of authors \cite{pumir1983solitary,joo1991long,rosenau1992bounded}, the uncontrolled 1D simplification (no spanwise variation) of this Benney equation exhibits finite-time blow-ups in numerical simulations. The existence of an optimal control for \eqref{benneyqequation1} is an open problem (lack of analytical results for the uncontrolled equation is problematic), and we are not aware of any numerical studies of optimal control for the 1D case or related thin film equations. However, feedback control methods for the 1D version of \eqref{benneyqequation1} have been considered by Thompson et al. \cite{thompson2016stabilising}.

\subsection{Weakly nonlinear evolution: 2D Kuramoto--Sivashinsky equation\label{sec:KSE}}

We seek to analyse the evolution of a sufficiently small perturbation about the non-dimensional exact constant solution to \eqref{benneyqequation1} given by $H=1$, $F=0$. For this, we substitute $H = 1 + \delta \eta$ and $F = 4 \delta^2  \zeta$ into equation \eqref{benneyqequation1} where $\eta, \zeta$ are $O(1)$ quantities, and also assume that $\cot\theta$ is $O(1)$. Truncating terms of $o(\delta)$, the resulting equation is
\begin{equation} \eta_t +  2 \eta_x + 4 \delta \eta \eta_x + \frac{8 {\Rey}}{15} \delta \eta_{xx}  -  \frac{2}{3} \delta  \eta_{xx} \cot\theta -  \frac{2}{3} \delta  \eta_{yy} \cot\theta +  \frac{1}{3\overline{{\Capil}}} \delta \Delta^2 \eta   =  4 \delta \zeta .\end{equation}
Rescaling with
\begin{equation}\left\{ \label{KSEchangeofvars1}
\begin{array}{c} {\displaystyle  t = \frac{75}{64\delta\overline{{\Capil}}{\Rey}^2} \; \widehat{t} , \qquad x -2t = \frac{\sqrt{5}}{2\sqrt{2} \; \overline{{\Capil}}^{1/2} {\Rey}^{1/2}} \; \widehat{x}, \qquad y =  \frac{\sqrt{5}}{2\sqrt{2} \; \overline{{\Capil}}^{1/2} {\Rey}^{1/2}} \; \widehat{y},} \\[8pt]
{\displaystyle \eta = \frac{4\sqrt{2}\; \overline{{\Capil}}^{1/2} {\Rey}^{3/2} }{15\sqrt{5}} \; \widehat{\eta} , \qquad \zeta =  \frac{64 \sqrt{2} \; \overline{{\Capil}}^{3/2} {\Rey}^{7/2} }{1125 \sqrt{5}} \; \widehat{\zeta}, }
\end{array}   \right.
\end{equation}
and dropping hats gives
\begin{equation} \label{controlled2dks} 
\eta_t + \eta \eta_x + (1- \kappa) \eta_{xx}  -  \kappa  \eta_{yy} + \Delta^2 \eta  =  \zeta,
\end{equation}
where $\kappa = 5 \cot\theta/4{\Rey}$. Note that the rescaling \eqref{KSEchangeofvars1} involves a Galilean transformation, so \eqref{controlled2dks} is in a moving reference frame. For our investigation of optimal controls with actuators at every point on the substrate, the advection term makes no difference and the result in the moving frame can be translated back into the ``lab" frame without any issue. Similarly, the advection in the streamwise direction has no effect on the purely spanwise forcing we study in section \ref{TranwaveeffectsSec}. The case of point-actuated controls requires consideration of travelling actuator grids, however, and the study of this situation is presented elsewhere.

We supplement \eqref{controlled2dks} with periodic boundary conditions on the rectangle $Q = [0,L_1]\times[0,L_2]$. For this purpose, we denote the wavenumber vectors by $\bm{\tilde{k}}$, with components
\begin{equation}\label{ktildedefn}\tilde{k}_1 = \frac{2\pi k_1}{L_1},\qquad \tilde{k}_2 = \frac{2\pi k_2}{L_2},\end{equation}
for $\bm{k} \in \mathbb{Z}^2$, so that $\eta$ and $\zeta$ may be written in terms of their Fourier series as
\begin{equation}\label{fourierseriesetazeta1}\eta = \sum_{\bm{k}\in\mathbb{Z}^2} \eta_{\bm{k}} e^{i \bm{\tilde{k}} \bm{\cdot} \bm{x}}, \qquad \zeta = \sum_{\bm{k}\in\mathbb{Z}^2} \zeta_{\bm{k}} e^{i \bm{\tilde{k}} \bm{\cdot} \bm{x}}.\end{equation}
Here $\eta_{-\bm{k}}$ and $\zeta_{-\bm{k}}$ are the complex conjugates of $\eta_{\bm{k}}$ and $\zeta_{\bm{k}}$, respectively, since both the solution and control are real-valued. Equation \eqref{controlled2dks} is equivalent to the infinite-dimensional system of ODEs for the Fourier coefficients,
\begin{equation}\label{fouriercoeffsfull1}\frac{\mathrm{d}}{\mathrm{d}t}\eta_{\bm{k}} + \frac{i\tilde{k}_1}{2} \sum_{\bm{m}\in\mathbb{Z}^2} \eta_{\bm{k}-\bm{m}} \eta_{\bm{m}} = \left[ (1-\kappa) \tilde{k}_1^2 - \kappa \tilde{k}_2^2 - |\bm{\tilde{k}}|^4  \right] \eta_{\bm{k}} + \zeta_{\bm{k}},\end{equation}
for each $\bm{k} \in \mathbb{Z}^2$. Since \eqref{controlled2dks} governs the perturbation of the flat film state, we consider initial conditions, denoted by $v$, with zero spatial mean ($v_{\bm{0}} = 0$). From \eqref{fouriercoeffsfull1} with $\bm{k} = \bm{0}$, the evolution of the mean depends upon the control $\zeta$ as
\begin{equation}\frac{\mathrm{d}}{\mathrm{d}t} \eta_{\bm{0}} = \zeta_{\bm{0}}.\end{equation}
We restrict to controls with $\zeta_{\bm{0}} \equiv 0$, so that the zero spatial mean of the solution is preserved.
The controls considered in this work can thus be thought of as moving fluid from one location to another, without changing the total fluid volume.

From the ODE description \eqref{fouriercoeffsfull1}, it can be observed that the dynamics of the transverse modes (i.e. $k_1 = 0$) are linear, being governed by
\begin{equation}\label{Fouriercoefftranssystem1earlier}\frac{\mathrm{d}}{\mathrm{d}t} \tilde{\eta}_{k_2}  =  (-\kappa \tilde{k}_2^2 - \tilde{k}_2^4) \tilde{\eta}_{k_2}+  \tilde{\zeta}_{k_2},\end{equation}
where we have denoted the transverse Fourier coefficients $\eta_{(0,k_2)}$ by $\tilde{\eta}_{k_2}$, with the same notation for the transverse components of the forcing $\zeta$. We let $\tilde{P}$ denote the projection onto the subspace of transverse modes, and define $\tilde{\eta}(y,t)$ to be the image of $\eta$ under this projection, $\tilde{\eta} = \tilde{P} \eta$. This projection satisfies the linear PDE
\begin{equation}\label{transversesystem1}\tilde{\eta}_t - \kappa \tilde{\eta}_{yy} + \tilde{\eta}_{yyyy} = \tilde{P}\zeta,\end{equation}
which is the equivalent of the ODE system \eqref{Fouriercoefftranssystem1earlier}, and may be obtained from 
\eqref{controlled2dks} by averaging over the streamwise direction. From \eqref{fouriercoeffsfull1}, it can be seen that the dynamics of the streamwise and mixed modes are slaved to $\tilde{\eta}$, i.e. the transverse modes only decouple partially from the full nonlinear system.

The parameter $\kappa$ encodes the incline of the substrate, and in the absence of control, we have three distinct dynamical regimes depending on its value. For $\kappa < 0$ we obtain hanging films, and, for $\zeta = 0$, a range of transverse modes with $0 < |\tilde{k}_2| <  (- \kappa)^{1/2}$ are linearly unstable. Since the governing equation \eqref{Fouriercoefftranssystem1earlier} is linear, unbounded exponential growth occurs with rate $- \kappa \tilde{k}_2^2 - \tilde{k}_2^4 > 0$. A large focus of this work is on stabilising hanging films with growing transverse modes. For $\kappa \geq 1$ the flow is overlying and the Reynolds number is subcritical, with $\kappa = 1$ corresponding to the critical Reynolds number ${\Rey}_c = 5\cot\theta/4$. In this case, simple energy estimates may be used to show that all solutions converge to zero in the absence of blowing or suction; importantly, the quadratic nonlinearity facilitates the transfer of energy, rather than being a sink or source of energy. Overlying flows with supercritical Reynolds numbers are then found for $0 \leq \kappa < 1$. When $\kappa = 0$, the film is vertical, and the canonical equation \eqref{controlled2dks} reduces to a forced version of the thin film equation obtained by Nepomnyashchy \cite{Nepo1,Nepo2},
\begin{equation} 
\label{nepoequation1231}\eta_t + \eta \eta_x + \eta_{xx}  + \Delta^2 \eta  =  \zeta.
\end{equation}
Without controls, this equation has been studied both analytically and numerically by a number of authors. Pinto \cite{pinto1,pinto2} proved the existence of a finite-dimensional global attractor, and instant analyticity of solutions. Numerical studies have been carried out by Akrivis et al.~\cite{akrivislinearly} and Tomlin et al.~\cite{tkp2018}, where it was shown that solutions possess a finite energy density independent of the periodic domain size with chaotic dynamics emerging on sufficiently large domains. Similar Kuramoto--Sivashinsky-type dynamics are found for $0<\kappa <1$.

We apply controls over a finite time interval $[0,T]$, in order to drive the solution to a zero-mean desired state $\overline{\eta}(\bm{x},t)$. The desired state is not required to be an exact (possibly unstable) solution of the uncontrolled problem, although many control methodologies take this assumption. For our setting, we employ homogeneous spatial norms, with all functions assumed to be $Q$-periodic and have zero spatial mean for all times; the spaces $H_0^s$ (where $H_0^0 = L_0^2$), $L^2(0,T;H_0^s)$, and $C^0([0,T]; H_0^s)$, are defined through their respective norms:
\begin{subeqnarray}\label{Hsnorm1aaa}
\gdef\thesubequation{\theequation \textit{a,b}}
\qquad\| \eta \|_{H_0^s}^2 =  \sum_{ \bm{k} \in\mathbb{Z}^2\backslash\{\bm{0}\} } | \bm{\tilde{k}} |^{2s} | \eta_{\bm{k}} |^2, \qquad \| \eta \|_{L^2(0,T;H_0^s)}^2 = \int_0^T \| \eta \|_{H_0^s}^2 \; \mathrm{d}t,\\
\gdef\thesubequation{\theequation \textit{c}}
\| \eta \|_{C^0([0,T];H_0^s)} = \sup_{t\in[0,T]} \| \eta \|_{H_0^s}.\qquad\qquad\qquad\quad
\end{subeqnarray}
Note that the Fourier symbol of the operator $(-\Delta)^{s/2}$ is $|\bm{\tilde{k}} |^{s}$, so the $L_0^2$-norm of $\Delta \eta$ equals the $H_0^2$-norm of $\eta$, for example. Inner products for the $H_0^s$ and $L^2(0,T;H_0^s)$ spaces are denoted by angled brackets with the appropriate subscripts. The space of admissible controls is denoted $F_{\textrm{ad}}$, a non-empty, closed, convex subset of $L^2(0,T;L_0^2)$, where the norm is defined by (\ref{Hsnorm1aaa}b) with $s=0$. We optimise with respect to the cost functional
\begin{equation}\label{costfunctional1aa}\mathcal{C}_{s,\gamma}(\eta,\zeta;\overline{\eta}) = \frac{1}{2}  \| \eta - \overline{\eta} \|_{L^2(0,T;H_0^s)}^2  + \frac{1}{2} \| \eta(\cdot,T) - \overline{\eta}(\cdot,T) \|_{H_0^s}^2 + \frac{\gamma}{2}  \| \zeta \|_{L^2(0,T;L_0^2)}^2
\end{equation}
where $s \in \mathbb{R}$ and $\gamma > 0$ ($\gamma = 0$ allows infinitely strong controls). We denote the three components of the cost functional by $\mathcal{C}_{s,\gamma}^{(1)}$, $\mathcal{C}_{s,\gamma}^{(2)}$, and $\mathcal{C}_{s,\gamma}^{(3)}$, respectively. The optimal control $\zeta^*$ and associated optimal state $\eta^*$ (which solves the KSE \eqref{controlled2dks} with $\zeta = \zeta^*$) are defined as minimisers of the cost \eqref{costfunctional1aa} over all controls in $F_{\textrm{ad}}$, i.e. if $\zeta' \in F_{\textrm{ad}}$ has associated state $\eta'$, then $\mathcal{C}_{s,\gamma}(\eta^*,\zeta^*;\overline{\eta}) \leq \mathcal{C}_{s,\gamma}(\eta',\zeta';\overline{\eta})$. The parameter $\gamma$ in \eqref{costfunctional1aa} can be thought of as the cost of using the control, relative to the cost of inaccuracy between $\eta$ and the desired state $\overline{\eta}$. For small $\gamma$ we may use larger controls to ensure that the solution is very close to the desired state, but for large $\gamma$ the controls are expensive and the difference between $\eta$ and $\overline{\eta}$ is less important. Larger values of the Sobolev index $s$ have the effect of increasing the weighting on the higher frequencies with larger wavelength modes in the solution costing relatively less. The second term in the cost functional \eqref{costfunctional1aa} ensures that the solution cannot become unbounded in $H_0^s$ at the final time.

The definition of the spatial norm used here is a measure of energy density; norm (\ref{Hsnorm1aaa}a) is related to the usual integral definition for the square of the Sobolev norm through multiplication by $L_1L_2$. Thus, the cost functionals for different choices of the domain $Q$ become comparable as they are independent of the underlying periodicities; this makes sense if we think of our problem as being on an infinite domain on which we impose periodicity, rather than on a bounded domain with periodic boundary conditions. Taking this definition is natural for the study of \eqref{controlled2dks}, since it has been observed that \eqref{nepoequation1231} without forcing ($\zeta = 0$) possesses a finite energy density as the domain size is increased \cite{tkp2018}. Furthermore, values of the norms are comparable to the amplitude of the interface.

For the numerical study of \eqref{controlled2dks} on $Q$-periodic domains, we utilise a family of implicit--explicit backwards differentiation formula (BDF) methods constructed by Akrivis et al. \cite{akrivis2004linearly} for a class of nonlinear parabolic equations under appropriate assumptions on the linear and nonlinear terms. They considered evolution equations of the form 
\begin{equation}\eta_t + \mathcal{A}\eta = \mathcal{B}(\eta),\end{equation}
where $\mathcal{A}$ is a positive definite, self-adjoint linear operator, and $\mathcal{B}$ is a nonlinear operator which satisfies a local Lipschitz condition. It was shown that these numerical schemes are efficient, convergent, and unconditionally stable. For us, we have the addition of forcing $\zeta$ to the right hand side, and thus the operators are defined as
\begin{subeqnarray}
\label{AandBdefns1}  
\gdef\thesubequation{\theequation \text{a,b}}
\mathcal{A}\eta = (1- \kappa) \eta_{xx}  -  \kappa  \eta_{yy} + \Delta^2 \eta + c\eta , \qquad \mathcal{B}(\eta,\zeta)  = - \eta \eta_x + \zeta +  c\eta,
\end{subeqnarray}
where $c$ is chosen to ensure that $\mathcal{A}$ is positive definite. These schemes were utilised for the corresponding 1D optimal control problem by Gomes et al. \cite{gomes2016stabilizing}, and their applicability was checked for similar multi-dimensional problems in \cite{akrivislinearly} (including a convergence study) and \cite{tomlin_papageorgiou_pavliotis_2017} for a non-local problem. In order to perform computations, we truncate the Fourier series \eqref{fourierseriesetazeta1} to $|k_1| \leq M$, $|k_2|\leq N$, which corresponds to a discretisation of the spatial domain $Q$ into $(2M + 1)\times (2N+1)$ equidistant points, and carry out time-integration of the system in Fourier space using the BDF methods.

\section{Optimal transverse control for hanging films\label{opttranssubsec1}}

We first consider controlling the transverse instabilities present for hanging films ($\kappa < 0$) by applying controls of the form $\zeta = \tilde{\zeta}(y,t)$ (with $\tilde{\zeta}_{0} = 0$). We work under the assumption that the dynamics of 
\eqref{controlled2dks} are bounded if the linear growth in the spanwise dimension is controlled. This is a reasonable assumption given the form of the ODE system \eqref{fouriercoeffsfull1}\footnote{Either side of $\kappa = 0$, there are unstable streamwise and mixed modes, governed by \eqref{fouriercoeffsfull1}; decreasing $\kappa$ only increases the strength of the destabilising terms relative to the nonlinearity and fourth order damping (and increases the number of unstable modes, similar to the effect of increasing $L_1$ and $L_2$), rather than qualitatively changing the structure of the nonlinear system. The critical value of $\kappa = 0$ is only important for the transverse system \eqref{transversesystem1}, as it is the point at which unstable modes first appear.}, and is confirmed by our numerical results. The ODE system \eqref{Fouriercoefftranssystem1earlier} has the explicit solution
\begin{equation}\label{explicitsolutiontrans1}\tilde{\eta}_{k_2}(t) = e^{-(\kappa \tilde{k}_2^2 + \tilde{k}_2^4)t }\tilde{v}_{k_2}  + e^{-(\kappa \tilde{k}_2^2 + \tilde{k}_2^4)t } \int_0^t \tilde{\zeta}_{k_2}(\tau) e^{(\kappa \tilde{k}_2^2 + \tilde{k}_2^4)\tau } \; \mathrm{d}\tau,\end{equation}
where $\tilde{v}_{k_2}$ are the transverse Fourier coefficients of a given initial condition $v$. Equation \eqref{explicitsolutiontrans1} is often referred to as the control-to-state map. Let $\tilde{P}_{\Xi}$ denote the projection onto the subspace spanned by the unstable purely transverse modes, where the wavenumbers belong to the set $\Xi = \{ k_2 \in \mathbb{Z} \; | \; 0 < |\tilde{k}_2| <  (- \kappa)^{1/2} \},$
and define $\tilde{\eta}_{\textrm{u}} = \tilde{P}_{\Xi} \eta$. The transverse modes in $\mathbb{Z}\backslash \Xi$ are either neutrally stable or decay exponentially without controls, and have no effect on the boundedness of the solution. Thus, in this section, we take $F_{\textrm{ad}}$ to be the image of $L^2(0,T;L_0^2)$ under the operator $\tilde{P}_{\Xi}$. In fact, the linearity of \eqref{Fouriercoefftranssystem1earlier} permits the explicit construction of an optimal control which is smooth in both time and space. Since the other transverse modes are damped, the long time dynamics will converge to those of the full system \eqref{controlled2dks} with $\tilde{\eta} = 0$ if the modes in $\Xi$ are controlled to zero. The desired state is taken to be the orthogonal projection $\overline{\eta} = (I -\tilde{P}_{\Xi}) \eta  = \eta - \tilde{\eta}_{\textrm{u}}$, with no components of unstable transverse modes. It is important to note that due to the slaving of the streamwise and mixed modes to the transverse modes, $\overline{\eta}$ is not the same as the evolution of an initial condition $(I - \tilde{P}_{\Xi})v$ (in other words, the evolution under the PDE and the projection $I - \tilde{P}_{\Xi}$ do not commute). Note also that the desired state is dependent on the solution locally in time, and is not a pre-prescribed function. The cost functional \eqref{costfunctional1aa} simplifies to 
\begin{equation}\label{transversecostfunc111}\mathcal{C}_{s,\gamma}(\eta,\tilde{\zeta}) = \frac{1}{2} \sum_{k_2 \in\Xi} \left[\int_0^T  |\tilde{k}_2|^{2s} |\tilde{\eta}_{k_2}(t)|^2 \; \mathrm{d}t +  |\tilde{k}_2|^{2s} |\tilde{\eta}_{k_2}(T)|^2 + \gamma \int_{0}^T | \tilde{\zeta}_{k_2}(t) |^2 \; \mathrm{d}t \right],\end{equation}
where the summands are costs for each individual mode. We denote the optimal solution and control with respect to the cost functional by $\eta^*$ and $\tilde{\zeta}^*$, respectively. For finite $T$, this results in a set of linear ODE control problems, each a linear quadratic regulator (LQR) problem, which have a well developed theory (see \cite{zabczyk2009mathematical} for example). The optimal control (derived in appendix \ref{appendixdericoptlinear}) is given by
$\tilde{\zeta}_{k_2}^* = r_{k_2}\tilde{\eta}_{k_2}^*$ (with stars denoting optimality) where $r_{k_2}$ satisfies a Riccati equation and final time boundary condition
\begin{equation}\label{ricattieqn1}\frac{\mathrm{d}}{\mathrm{d}t}r_{k_2} = - r_{k_2}^2 + 2 (\kappa \tilde{k}_2^2 + \tilde{k}_2^4) r_{k_2} + \frac{|\tilde{k}_2|^{2s}}{\gamma}, \quad r_{k_2}(T) = - |\tilde{k}_2|^{2s}/\gamma.\end{equation}
By defining the roots
\begin{equation}\lambda_{\pm} =  (\kappa \tilde{k}_2^2 + \tilde{k}_2^4)  \pm \sqrt{ (\kappa \tilde{k}_2^2 + \tilde{k}_2^4)^2 + \frac{|\tilde{k}_2|^{2s}}{\gamma}},\end{equation}
we can give the solution to \eqref{ricattieqn1} explicitly as
\begin{equation}r_{k_2} = \frac{\lambda_{-}   \exp(\lambda_{+}C+ \lambda_{-}t)   - \lambda_{+}\exp(\lambda_{+}t+\lambda_{-}C)}
{ \exp(\lambda_{+}C+ \lambda_{-}t) - \exp(\lambda_{+}t+\lambda_{-}C)}, \qquad  C = \frac{1}{\lambda_{-} - \lambda_{+}} \log\left(\frac{\lambda_{-}(\lambda_{+} - 1)}{\lambda_{+}(\lambda_{-} - 1)}\right) + T.\end{equation}
Furthermore, the value of the cost functional attained by the optimal control is
\begin{equation}\label{costreduced1}\mathcal{C}_{s,\gamma}^* = - \frac{\gamma}{2} \sum_{k_2 \in \Xi} r_{k_2}(0)| \tilde{v}_{k_2} |^2.\end{equation}
Since the problem under consideration is linear and finite-dimensional, these optimal controls are unique.

For comparison, in the infinite time-horizon case, i.e. $T=\infty$, we find $\dot{r}_{k_2} = 0$ so that $r_{k_2} = \lambda_{-}$ (this is the correct root to stabilise the system \eqref{ricattieqn1}). Substituting back into (\ref{Fouriercoefftranssystem1earlier}) yields the solution
\begin{equation}\tilde{\eta}_{k_2}^* =  \exp\left( - t \; \sqrt{ (\kappa \tilde{k}_2^2 + \tilde{k}_2^4)^2 + |\tilde{k}_2|^{2s}/\gamma} \right)\tilde{v}_{k_2},\end{equation}
where the optimal control is defined by $\tilde{\zeta}_{k_2}^* = \lambda_{-} \tilde{\eta}_{k_2}^*$.

\begin{figure}
\centering
\begin{subfigure}{2.9in}
\caption{ $L_0^2$-norms for case (i).} 
\includegraphics[width=2.9in]{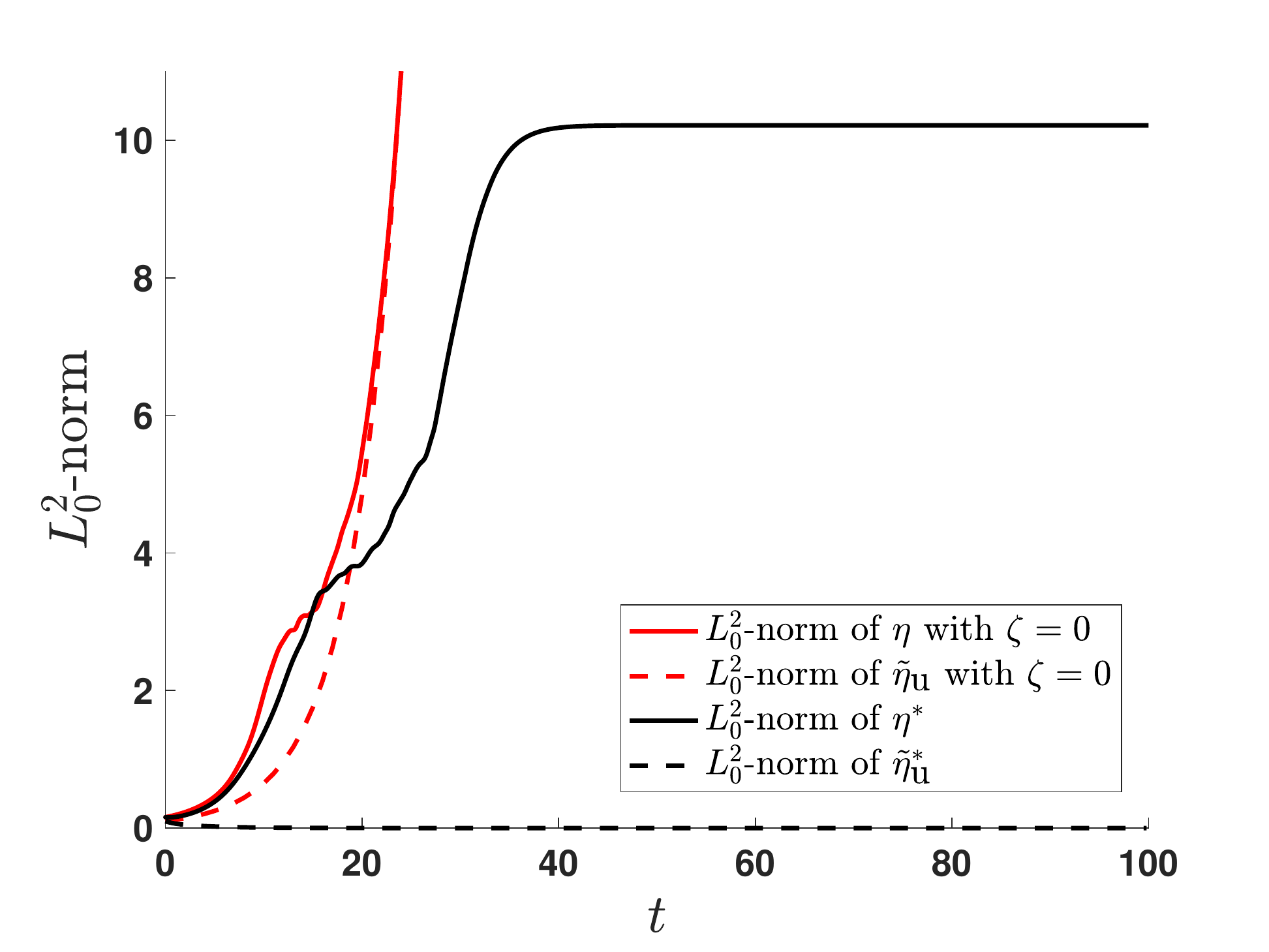}
\end{subfigure}
\begin{subfigure}{2.9in}
\caption{ $L_0^2$-norms for case (ii).} 
\includegraphics[width=2.9in]{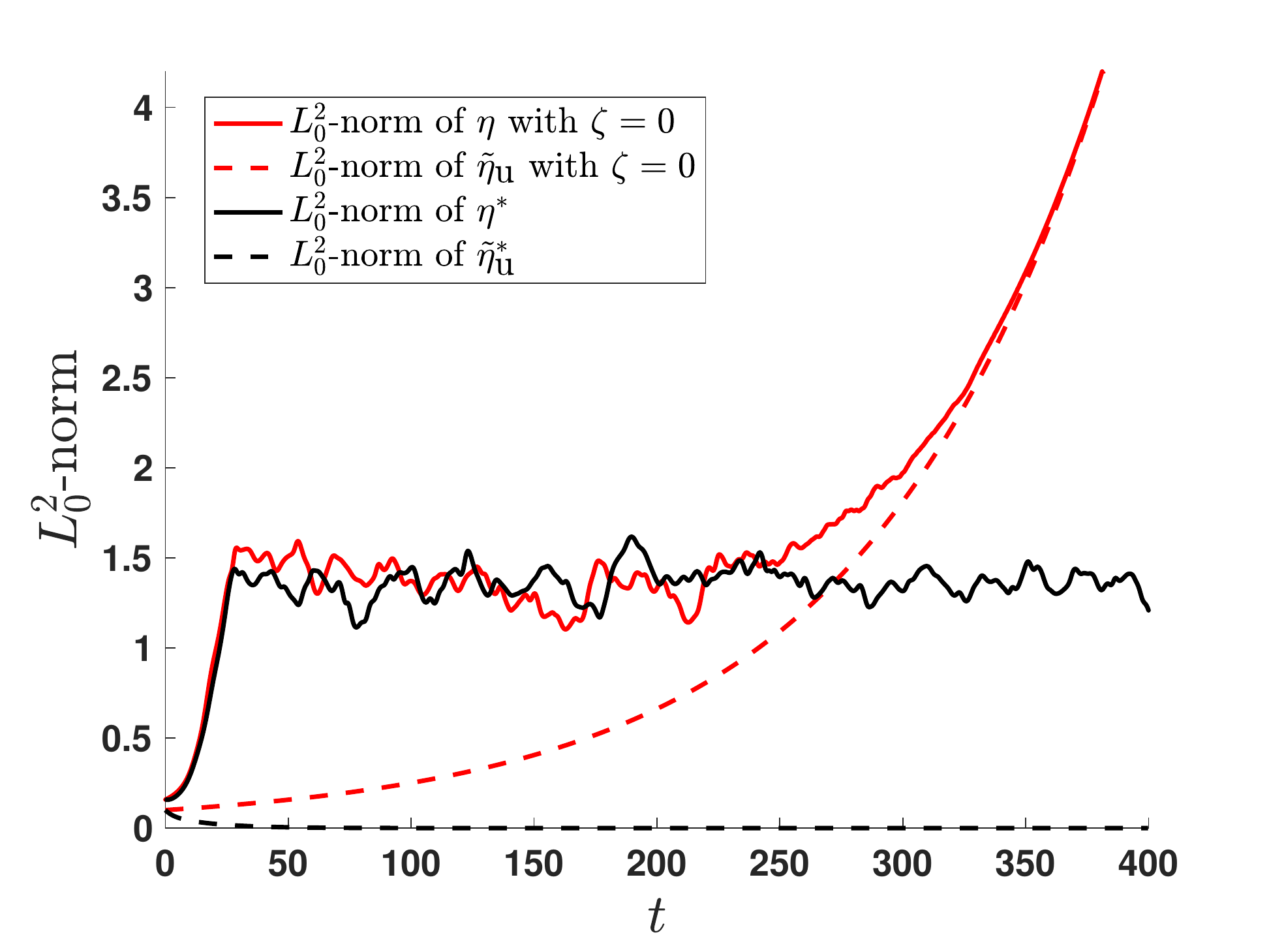}
\end{subfigure}
\begin{subfigure}{2.9in}
\caption{Components of $\mathcal{C}_{2,1}^*$ for case (i).} 
\includegraphics[width=2.9in]{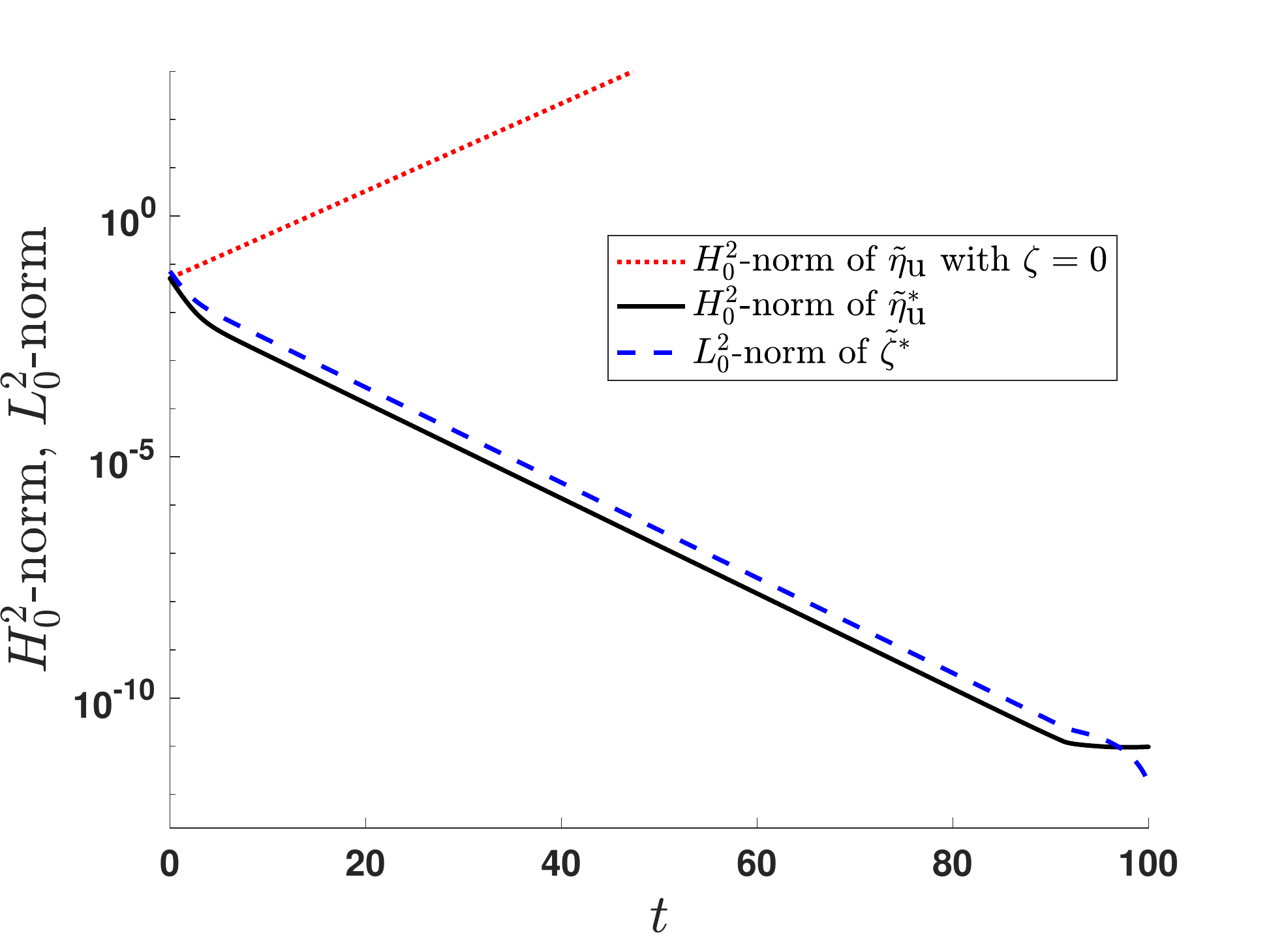}
\end{subfigure}
\begin{subfigure}{2.9in}
\caption{Components of $\mathcal{C}_{2,1}^*$ for case (ii).} 
\includegraphics[width=2.9in]{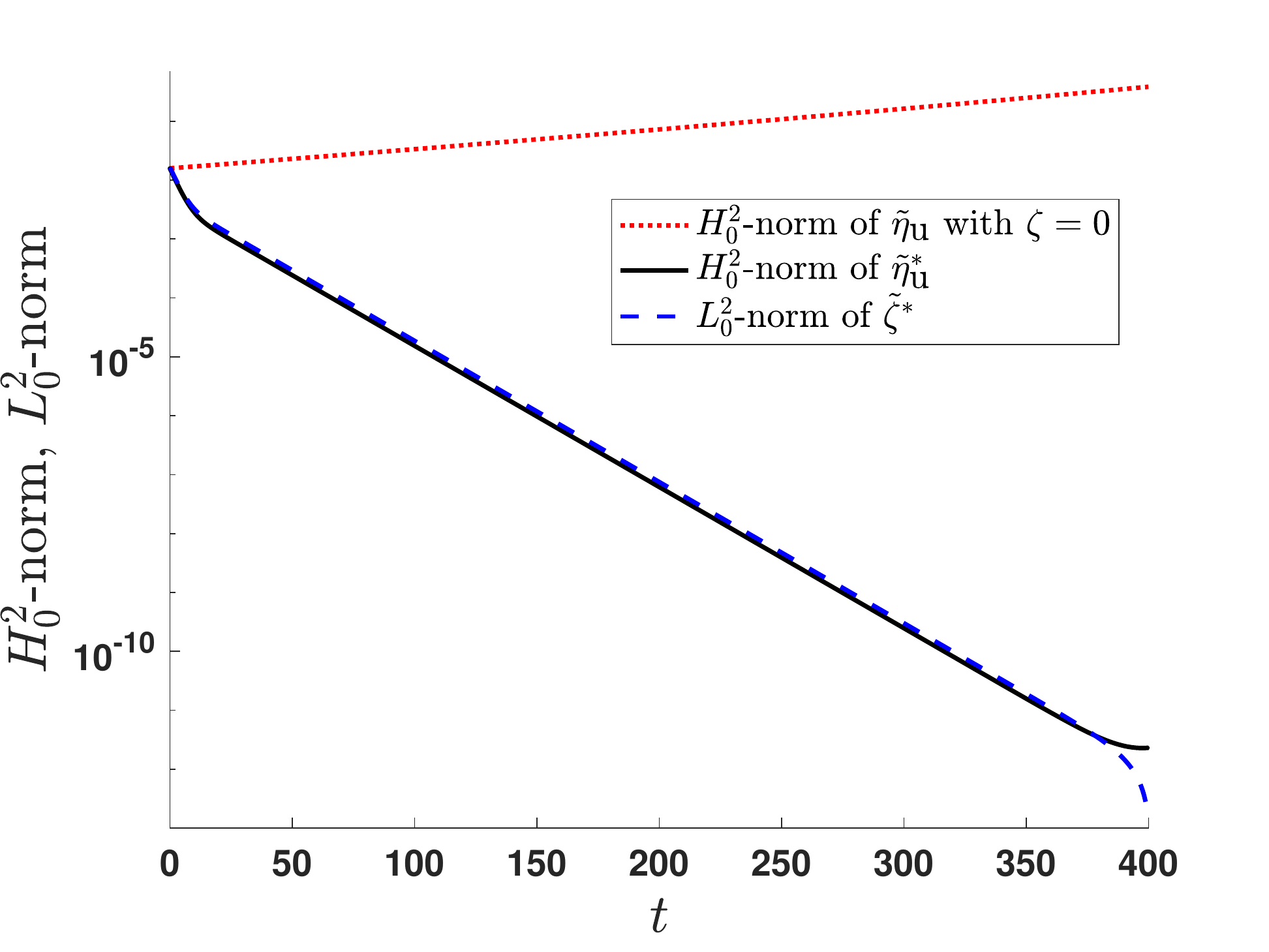}
\end{subfigure}
\begin{subfigure}{2.9in}
\caption{Profile of $\eta^*$ at $T=100$ for case (i).} 
\includegraphics[width=2.9in]{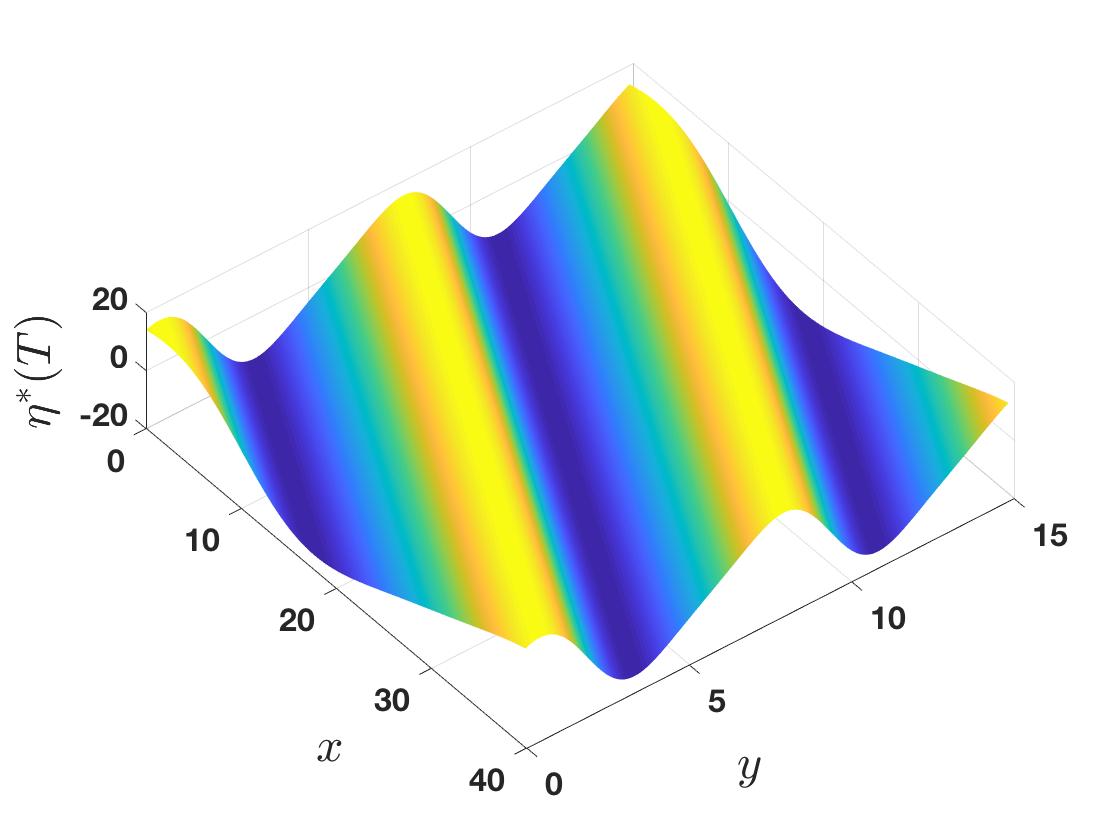}
\end{subfigure}
\begin{subfigure}{2.9in}
\caption{Profile of $\eta^*$ at $T=400$ for case (ii).} 
\includegraphics[width=2.9in]{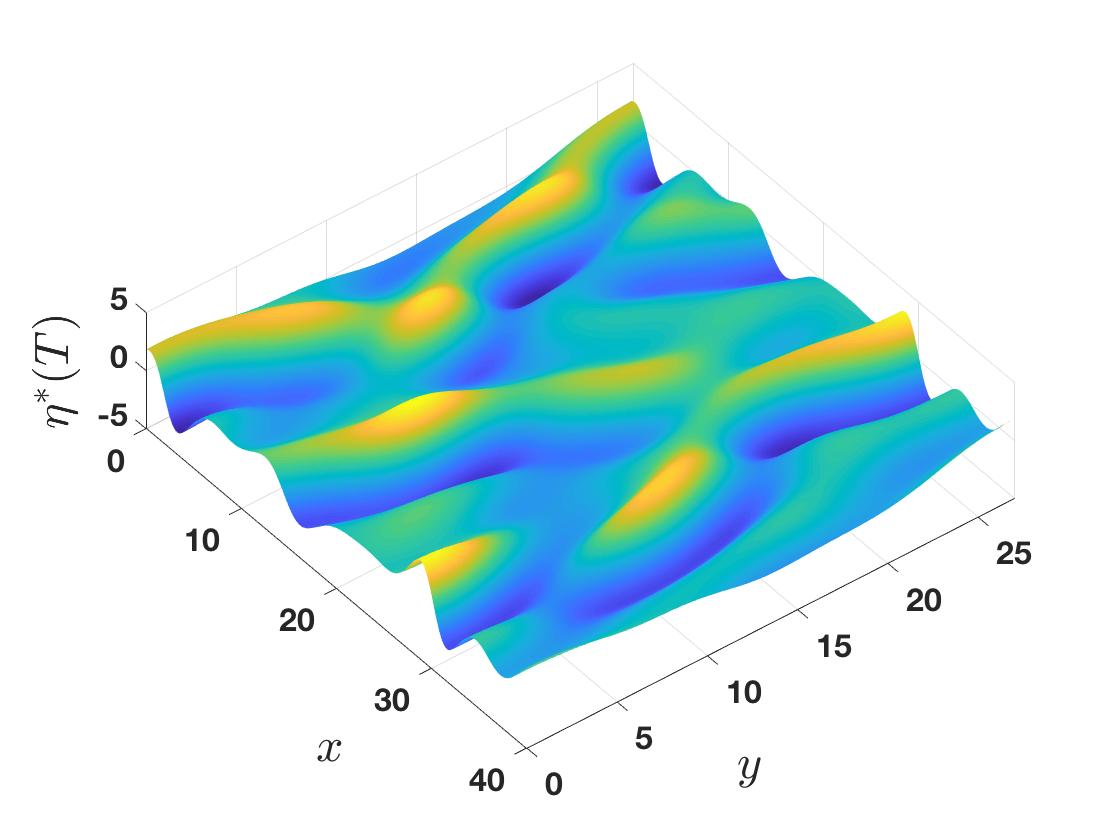}
\end{subfigure}
\caption{Application of optimal transverse controls for cases (i) (left) and (ii) (right) with $s=2$ and $\gamma = 1$. Panels (a,b) show the $L_0^2$-norms of the controlled and uncontrolled solutions (solid lines), as well as norms of the projections onto the subspaces spanned by the unstable transverse modes (dashed lines). Panels (c,d) show the behaviour of the integrands of the cost functional \eqref{transversecostfunc111}. The solution profiles at the respective final times are shown in panels (e,f).} \label{optimaltransversefig}
\end{figure}
Now we test the optimal transverse control constructed above numerically. So that unstable transverse modes are present, 
we consider two cases of hanging films, taking parameters (i) $\kappa = - 1 $, $L_1 = 40$, $L_2 = 15$ and $T = 100$, and (ii) $\kappa = - 0.25 $, $L_1 = 40$, $L_2 = 27$ and $T=400$. The final time $T$ differs between the two simulations as the timescale of the dynamics are dependent on $\kappa$, which varies between the two cases. For (i) and (ii), the $(0,\pm 1)$ and $(0,\pm 2)$-modes are unstable, and we take the following initial condition which contains contributions from these modes:
\begin{align}\label{initicond1}v(\bm{x}) = \frac{1}{10} & \left[ \cos\left(\frac{2\pi x}{L_1} \right) + \cos\left(\frac{2\pi x}{L_1} + \frac{2\pi y}{L_2}  \right) \right. \nonumber \\
&\qquad\qquad\quad\;\; \left. + \sin\left(\frac{4\pi x}{L_1} + \frac{2\pi y}{L_2}  \right) +  \sin\left( \frac{2\pi y}{L_2}  \right) + \sin\left( \frac{4\pi y}{L_2}  \right) \right].\end{align}
We choose $L_1$ to be sufficiently large that the dynamics of the controlled solutions are 
non-trivial, i.e. unstable streamwise and mixed modes are present. Figure \ref{optimaltransversefig} shows the optimal control successfully inhibiting the growth of transverse 
modes for cases (i) and (ii) in numerical simulations for the choices of $s = 2$ and $\gamma = 1$. For case (i), the uncontrolled solutions behaves as $\eta \sim \sin(4\pi y/L_2)e^{\lambda t},$
(the $(0,\pm2)$-modes are more unstable than the $(0,\pm1)$-modes) with exponential growth rate 
$\lambda = -\kappa(4\pi/L_2)^2 - (4\pi/L_2)^4 \approx 0.209$. With the application of the optimal transverse controls, the $H_0^2$-norm of the unstable modes and the $L_0^2$-norm of the control decay exponentially as shown in Figure \ref{optimaltransversefig}(c), and the energy of the full solution remains bounded as desired with a modal steady state (dominated by the $(m,2m)$-modes for $m\in\mathbb{Z}$) emerging. In case (ii), the $(0,\pm 1)$-modes dominate the uncontrolled solution with linear growth rate $\lambda \approx 0.0106$, and the controls successfully prevent unbounded growth and reveal a chaotic attractor for the uncontrolled dynamics. The cost functional takes the values $3.27\times 10^{-3}$ and $7.24\times 10^{-4}$ with the optimal control for the respective cases, in agreement with the analytical value given by \eqref{costreduced1}. We observe in both cases that the $H_0^2$-norm of the transverse modes and the $L_0^2$-norm of the optimal controls decay exponentially for the majority of the time interval $[0,T]$. For mid-range times, the controlled transverse modes decay with exponential decay rate proportional to the uncontrolled linear growth rate, and $r_{k_2} \approx \lambda_{-}$ (as in the infinite time horizon problem). For example, in case (i), the most unstable $(0,\pm2)$-modes are controlled to zero by the optimal control with a greatest decay rate. 
\begin{figure}
\centering
\includegraphics[width=2.9in]{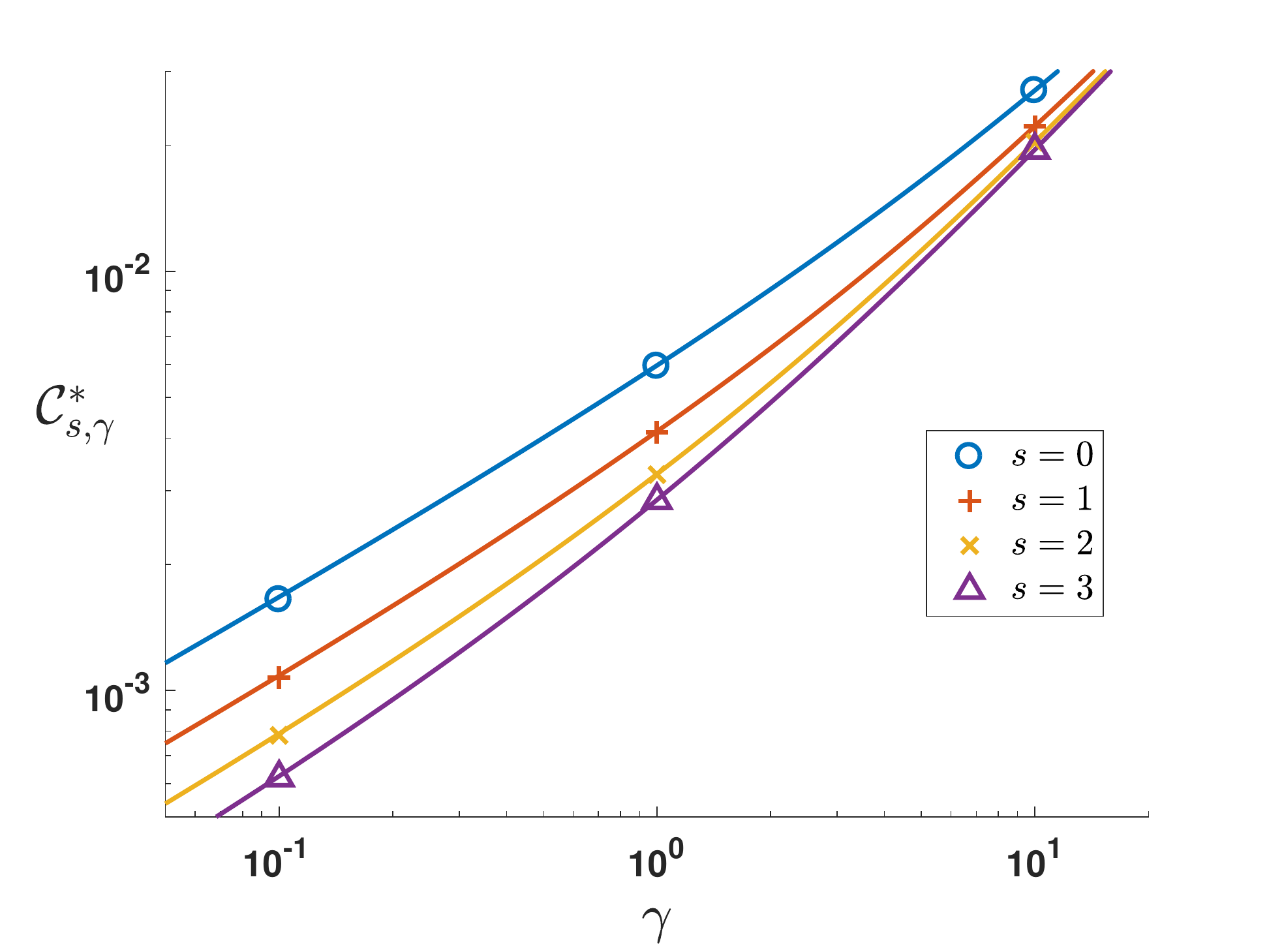}
\caption{Values of $\mathcal{C}_{s,\gamma}^*$ for case (i) as $\gamma$ and $s$ vary. The value of the cost functional attained by the optimal control and the corresponding solution is displayed for a range of $\gamma$ and $s\in \{0,1,2,3\}$, this is given analytically by \eqref{costreduced1}. The data points are obtained from numerical simulations for $\gamma\in \{0.1,1,10\}$, and give good agreement with the analytical values represented by the lines.}\label{costfunscatter}
\end{figure}
Figure \ref{costfunscatter} shows how the (analytical) optimal value of the the cost functional \eqref{costreduced1} is affected by changes in $\gamma$ and $s$ for case (i); we also validate our numerical results by superimposing the cost functional values obtained for simulations taking $\gamma \in \{ 0.1,1,10\}$ and $s\in \{ 0,1,2,3\}$. We observe that $\mathcal{C}_{s,\gamma}^*$ is an increasing function of $\gamma$ and a decreasing function of $s$ -- this is not surprising given the definitions of the $H_0^s$-norm \eqref{Hsnorm1aaa} and the cost functional \eqref{transversecostfunc111}. These cases all provide plots similar to those found in Figure \ref{optimaltransversefig} for case (i).

It now remains to comment briefly on the eventual dynamics of the system with $\tilde{\eta} = 0$ resulting from application of the above controls over a large time interval. Interestingly, we found that as the gravitational instability is strengthened ($\kappa$ decreased), the usual chaotic dynamics observed for the vertical film case in \cite{tkp2018} give way to diagonal modal attractors for the controlled solutions; this can partly be seen in the cases plotted in Figure \ref{optimaltransversefig}. It would be expected that a forwards Feigenbaum cascade occurs for overlying films as $\kappa$ decreases to zero (for fixed $L_1$ and $L_2$) given the results of Smyrlis and Papageorgiou \cite{depapa2} for the 1D problem, thus films close to vertical (controlled if necessary) are the most prone to chaotic dynamics. With the length parameters $L_1 = 40$ and $L_2 = 27$, as in case (ii), we find mostly chaotic dynamics as $\kappa$ is decreased to $-0.44$, below which a time-periodic attractor emerges. This behaviour dominates until $\kappa = -0.58$ when steady modal waves dominated by the $(m,3m)$-modes for $m\in\mathbb{Z}$ appear -- this is similar to the wave in Figure \ref{optimaltransversefig}(e).

As shown in this section, only the unstable transverse modes are responsible for the unbounded behaviour of \eqref{controlled2dks} with $\kappa < 0$. Nevertheless, this linear control theory may be extended to all transverse modes, where the summations in the previous lines extend from $\Xi$ to $\mathbb{Z}$ in order to drive the stable transverse modes to zero optimally, and $\tilde{\eta}_{\textrm{u}}$ or $\tilde{\eta}$ does not necessarily need to be controlled to zero (see appendix \ref{appendixdericoptlinear}). It is also worthwhile to note that the weakly nonlinear theory predicts that constant transverse forcing (as would arise from spanwise substrate corrugation) would not be a successful control method.

\section{Full optimal control\label{FOCref}} 

In this section, we consider the general optimal control problem for the 2D KSE \eqref{controlled2dks}, where the set of admissible controls $F_{\textrm{ad}}$ is a non-empty, closed, convex subset of $L^2(0,T;L_0^2)$. We consider the existence of an optimiser for our problem, and then give the methodology for numerical simulations. In contrast to section \ref{opttranssubsec1}, we must resort to an iterative algorithm for the full nonlinear problem. In the numerical simulations that follow the analysis, we take $F_{\textrm{ad}} = L^2(0,T;L_0^2)$.

Following the abstract formulation in \cite{lions1969quelques,temam1997infinite}, for example, we have the following local existence and uniqueness theorem (as in Pinto \cite{pinto1} for the case of $\zeta =0$ and $\kappa = 0$):
\begin{theorem}\label{existenceunThm1}For initial condition $v \in L_0^2$ and $\zeta \in L^2(0,T;L_0^2)$, there exists a unique solution $\eta$ to \eqref{controlled2dks} in $X_1:= C^0([0,T]; L_0^2) \bigcap L^2(0,T; H_0^2)$ (weak solution). Moreover, if $v\in H_0^2$, then $\eta \in X_2:= C^0([0,T]; H_0^2) \bigcap L^2(0,T; H_0^4)$ (strong solution).
\end{theorem}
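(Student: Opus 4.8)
The plan is to follow the classical Galerkin approach for semilinear parabolic equations, exploiting the fact that the linear part of \eqref{controlled2dks} is a fourth-order dissipative operator whose (finitely many) destabilising modes can be absorbed into a coercive estimate. First I would set up Galerkin approximations $\eta^{(n)}$ living in the span of the Fourier modes $e^{i\bm{\tilde k}\cdot\bm{x}}$ with $|\bm{k}|\le n$, solving the projected system of ODEs \eqref{fouriercoeffsfull1}; local-in-time existence of $\eta^{(n)}$ is immediate from Picard--Lindel\"of since the nonlinearity is a polynomial (in fact quadratic) in the coefficients.

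The heart of the argument is a set of a priori estimates uniform in $n$. Testing the equation against $\eta^{(n)}$ in $L_0^2$, the nonlinear term $\langle \eta\eta_x,\eta\rangle$ vanishes by periodicity and zero mean (the quadratic nonlinearity conserves the $L_0^2$-norm), the bi-Laplacian gives $+\|\eta^{(n)}\|_{H_0^2}^2$, the second-derivative terms $(1-\kappa)\eta_{xx}-\kappa\eta_{yy}$ contribute a term controlled by $C\|\eta^{(n)}\|_{H_0^1}^2$, and the forcing gives $\langle\zeta,\eta^{(n)}\rangle$. Using the interpolation inequality $\|\eta\|_{H_0^1}^2 \le \varepsilon\|\eta\|_{H_0^2}^2 + C_\varepsilon\|\eta\|_{L_0^2}^2$ to absorb the destabilising term into half of the $H_0^2$ dissipation, and Cauchy--Schwarz plus Young on the forcing term, a Gr\"onwall argument yields a bound on $\eta^{(n)}$ in $C^0([0,T];L_0^2)\cap L^2(0,T;H_0^2)$ depending only on $\|v\|_{L_0^2}$, $\|\zeta\|_{L^2(0,T;L_0^2)}$, $\kappa$ and $T$; in particular the approximate solutions are global. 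A bound on $\partial_t\eta^{(n)}$ in $L^2(0,T;H_0^{-2})$ follows by duality from the equation, the only delicate point being the nonlinear term, for which one checks $\eta\eta_x = \tfrac12(\eta^2)_x$ is bounded in $L^2(0,T;H_0^{-1})$ using $\|\eta^2\|_{L_0^2}\lesssim\|\eta\|_{L^4}^2\lesssim\|\eta\|_{L_0^2}\|\eta\|_{H_0^2}$ in two dimensions (Ladyzhenskaya / Gagliardo--Nirenberg). With these bounds, Aubin--Lions gives a subsequence converging strongly in $L^2(0,T;H_0^1)$ and weakly in $L^2(0,T;H_0^2)$; the strong convergence is exactly what is needed to pass to the limit in the quadratic term, and one identifies the limit $\eta$ as a weak solution in $X_1$, with $\eta\in C^0([0,T];L_0^2)$ after modification on a null set via the standard Lions--Magenes lemma. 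Uniqueness is obtained by subtracting two solutions, testing the difference equation against the difference, and using the two-dimensional bound on the nonlinear term together with Gr\"onwall; here the $L^2(0,T;H_0^2)$ regularity of the solutions is precisely what makes the nonlinear difference term controllable.

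For the strong-solution statement when $v\in H_0^2$, I would run a second energy estimate: test the equation against $\Delta^2\eta^{(n)}$ (equivalently differentiate in the $H_0^2$ inner product), producing $\tfrac{d}{dt}\|\eta^{(n)}\|_{H_0^2}^2 + \|\eta^{(n)}\|_{H_0^4}^2$ on the left up to lower-order dissipative terms, the destabilising terms again absorbed by interpolation, the forcing term $\langle\zeta,\Delta^2\eta\rangle$ estimated as $\le\tfrac14\|\eta\|_{H_0^4}^2 + C\|\zeta\|_{L_0^2}^2$, and the nonlinear term $\langle\eta\eta_x,\Delta^2\eta\rangle$ bounded using Gagliardo--Nirenberg in 2D by $C\|\eta\|_{H_0^2}^\alpha\|\eta\|_{H_0^4}^\beta$ with $\beta<2$, so that Young's inequality absorbs it and leaves a term polynomial in $\|\eta\|_{H_0^2}$; Gr\"onwall (using the already-established $L^2(0,T;H_0^2)$ bound to control the time integral of the polynomial coefficient) then gives the uniform $C^0([0,T];H_0^2)\cap L^2(0,T;H_0^4)$ bound, and one passes to the limit as before.

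The main obstacle is the treatment of the quadratic nonlinearity in the higher-order (strong-solution) estimate: in two space dimensions $\langle\eta\eta_x,\Delta^2\eta\rangle$ is borderline, and one must be careful to distribute derivatives (integrating by parts and using that the nonlinearity has the conservative form $\tfrac12(\eta^2)_x$) so that the resulting Gagliardo--Nirenberg exponent on $\|\eta\|_{H_0^4}$ is strictly less than $2$; this is what confines the result to finite time horizons and prevents a global-in-time a priori bound in $H_0^2$ from this argument alone. Everything else — the Galerkin machinery, Aubin--Lions compactness, passage to the limit, and uniqueness — is routine given the earlier set-up, and closely parallels Pinto's treatment \cite{pinto1} of the unforced case $\zeta=0$, $\kappa=0$, the only new ingredients being the linear terms proportional to $\kappa$ (harmless, lower order) and the inhomogeneity $\zeta\in L^2(0,T;L_0^2)$ (harmless by Cauchy--Schwarz).
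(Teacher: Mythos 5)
Your Galerkin argument is correct and is exactly the route the paper (implicitly) relies on: the theorem is stated by appeal to the abstract parabolic framework of Lions/Temam and Pinto's treatment of the unforced case, which is precisely the Galerkin--energy-estimate--Aubin--Lions machinery you spell out. The basic a priori bound you derive is the one recorded in the paper's Appendix B (inequality \eqref{importantbound}), and your handling of the nonlinearity via $\eta\eta_x=\tfrac12(\eta^2)_x$ and the 2D Ladyzhenskaya/Agmon estimates matches the estimates the paper uses later in the proof of Theorem \ref{thm4.2}.
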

We shall use both parts of the above theorem for the proof of existence of an optimal control. Uniqueness of an optimal control is not guaranteed as the optimisation problem is not convex -- this is due to the $\eta\eta_x$ nonlinearity. The following theorem and proof makes no deep assumptions (such as analyticity or regularity in higher Sobolev spaces) which would be expected to be true for equations such as \eqref{controlled2dks}, only requiring the above existence and uniqueness theorem. Thus, the proof is given in a very general framework which may be applied to similar problems. However, this restricts us in the range of the index $s$, and the regularity of the initial condition and desired state.

\begin{theorem}\label{thm4.2} Let $v \in H_0^2$, $s \leq 2$, $\gamma > 0$, and take $\overline{\eta} \in C^0([0,T];H_0^s)$. Define $F_{\textrm{ad}}$ to be a non-empty, closed, convex subset of $L^2(0,T;L_0^2)$. Then, there exists an optimal control $\zeta^*\in F_{\textrm{ad}}$ for the KSE \eqref{controlled2dks} with initial condition $v$ which minimises the cost functional $\mathcal{C}_{s,\gamma}$ defined by \eqref{costfunctional1aa}.
\end{theorem}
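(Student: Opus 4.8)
The plan is to use the direct method of the calculus of variations. Take a minimizing sequence $\{\zeta_n\} \subset F_{\textrm{ad}}$ such that $\mathcal{C}_{s,\gamma}(\eta_n,\zeta_n;\overline{\eta}) \to \inf$, where $\eta_n$ is the solution of \eqref{controlled2dks} with control $\zeta_n$ and initial condition $v$, which exists and is unique by Theorem \ref{existenceunThm1}. Since $\gamma > 0$, the term $\frac{\gamma}{2}\|\zeta_n\|_{L^2(0,T;L_0^2)}^2$ is bounded along the minimizing sequence, so $\{\zeta_n\}$ is bounded in the Hilbert space $L^2(0,T;L_0^2)$. By weak compactness (Banach--Alaoglu) we extract a subsequence, not relabelled, with $\zeta_n \rightharpoonup \zeta^*$ weakly in $L^2(0,T;L_0^2)$; since $F_{\textrm{ad}}$ is convex and closed, it is weakly closed, so $\zeta^* \in F_{\textrm{ad}}$.

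The next step is to derive uniform a priori bounds on the states $\eta_n$ in the strong-solution class $X_2$, so that we can pass to the limit in the PDE. I would use the energy estimates underlying Theorem \ref{existenceunThm1}: testing \eqref{controlled2dks} with $\eta_n$ and with $\Delta^2\eta_n$ (or equivalently working mode-by-mode on \eqref{fouriercoeffsfull1}), the dissipative bi-Laplacian controls the destabilizing second-order terms and the nonlinearity $\eta_n\eta_x$ is handled by the usual interpolation/Gagliardo--Nirenberg inequalities in two dimensions, with the forcing contribution bounded by $\|\zeta_n\|_{L^2(0,T;L_0^2)}$, which is uniformly bounded. This yields $\{\eta_n\}$ bounded in $C^0([0,T];H_0^2) \cap L^2(0,T;H_0^4)$ and, from the equation, $\{\partial_t\eta_n\}$ bounded in $L^2(0,T;L_0^2)$. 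By Aubin--Lions, along a further subsequence $\eta_n \to \eta^*$ strongly in $L^2(0,T;H_0^s)$ for $s < 4$ (in particular for any $s \le 2$), $\eta_n \rightharpoonup \eta^*$ weakly in $L^2(0,T;H_0^4)$, $\partial_t\eta_n \rightharpoonup \partial_t\eta^*$ weakly in $L^2(0,T;L_0^2)$, and, after extracting once more, $\eta_n(\cdot,T) \rightharpoonup \eta^*(\cdot,T)$ weakly in $H_0^2$ (using continuity in time together with the bound, or a trace argument). The strong $L^2(0,T;H_0^1)$ convergence suffices to pass to the limit in the nonlinear term, since $\eta_n\partial_x\eta_n = \tfrac12\partial_x(\eta_n^2)$ and $\eta_n^2 \to (\eta^*)^2$ in $L^1$ (indeed in better norms by the 2D embeddings). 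All the linear terms pass to the limit by weak convergence, and the forcing term passes by $\zeta_n \rightharpoonup \zeta^*$; hence $\eta^*$ is the unique solution of \eqref{controlled2dks} associated with $\zeta^*$ and initial datum $v$.

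Finally, I would invoke weak lower semicontinuity of the cost. Each of $\mathcal{C}_{s,\gamma}^{(1)}$, $\mathcal{C}_{s,\gamma}^{(2)}$, $\mathcal{C}_{s,\gamma}^{(3)}$ is (a shift of) the square of a norm in a Hilbert space — $L^2(0,T;H_0^s)$, $H_0^s$, and $L^2(0,T;L_0^2)$ respectively — hence convex and strongly continuous, therefore weakly lower semicontinuous. In fact $\mathcal{C}_{s,\gamma}^{(1)}$ converges (strongly, by the strong $L^2(0,T;H_0^s)$ convergence of $\eta_n$ when $s \le 2$), while for $\mathcal{C}_{s,\gamma}^{(2)}$ and $\mathcal{C}_{s,\gamma}^{(3)}$ we only use lower semicontinuity under the weak convergence $\eta_n(\cdot,T)\rightharpoonup\eta^*(\cdot,T)$ in $H_0^s$ (note $H_0^2 \hookrightarrow H_0^s$ for $s\le2$) and $\zeta_n\rightharpoonup\zeta^*$. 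Combining, $\mathcal{C}_{s,\gamma}(\eta^*,\zeta^*;\overline{\eta}) \le \liminf_n \mathcal{C}_{s,\gamma}(\eta_n,\zeta_n;\overline{\eta}) = \inf_{\zeta\in F_{\textrm{ad}}}\mathcal{C}_{s,\gamma}$, so $\zeta^*$ is optimal.

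The main obstacle is the compactness needed to pass to the limit in the nonlinearity: one must genuinely obtain the uniform $X_2$ bounds on the states (which is exactly where the restriction $v\in H_0^2$ enters) and then apply an Aubin--Lions argument with enough room between the space $H_0^4$ of the state and the space $L_0^2$ of the time derivative to gain strong convergence in $H_0^1$; the constraint $s\le 2$ is what makes the first cost term pass to the limit by strong convergence rather than merely lower semicontinuity, and it also guarantees $H_0^2 \hookrightarrow H_0^s$ for the endpoint term. Everything else — weak closedness of $F_{\textrm{ad}}$, lower semicontinuity of quadratic functionals, and convergence of the linear terms — is standard.
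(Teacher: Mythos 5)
Your proposal is correct in outline and follows the same direct-method skeleton as the paper (minimising sequence, weak compactness of the controls, a priori bounds on the states, Aubin--Lions compactness to pass to the limit in the nonlinearity, weak lower semicontinuity of the cost), but it runs the compactness argument at a different regularity level. The paper deliberately stays at the \emph{weak-solution} level: the states are bounded uniformly only in $X_1 = C^0([0,T];L_0^2)\cap L^2(0,T;H_0^2)$ via the explicit $L^2$ energy estimate \eqref{importantbound} of Appendix~\ref{Appendixboundderiv}, the nonlinearity is controlled in $H_0^{-1}$ by an Agmon inequality, the time derivatives are bounded in $L^2(0,T;H_0^{-2})$, and the compactness theorem yields strong convergence in $L^2(0,T;H_0^s)$ only for $s<2$; the hypothesis $v\in H_0^2$ enters only to make the final-time term of the cost finite and to place the limit state $\eta^*$ in $X_2$ at the very end. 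You instead claim uniform bounds in the \emph{strong-solution} class $X_2 = C^0([0,T];H_0^2)\cap L^2(0,T;H_0^4)$ and run Aubin--Lions between $H_0^4$ and $L_0^2$, which buys you strong convergence in $L^2(0,T;H_0^s)$ for all $s<4$ and a cleaner limit passage (the first cost component then converges rather than being merely lower semicontinuous). The trade-off is that the uniform $X_2$ bound is the one step you assert rather than establish: Theorem~\ref{existenceunThm1} as stated gives only membership of each $\eta_n$ in $X_2$, not a bound uniform over the minimising sequence. Such a bound is true and obtainable by a second energy estimate (testing with $\Delta^2\eta_n$, using Agmon to bound $\int \eta_n\partial_x\eta_n\,\Delta^2\eta_n$, absorbing $\|\eta_n\|_{H_0^4}^2$ by Young, and closing with Gronwall using the time-integrability of $\|\eta_n\|_{H_0^2}^2$ already supplied by the $L^2$ estimate), but this is genuinely more technical work than your phrase ``the usual interpolation inequalities'' acknowledges, and it is precisely the work the paper's lower-regularity route is designed to avoid. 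You should either carry out that higher-order estimate explicitly or drop down to the paper's $X_1$-level argument; with that supplied, your proof is complete.
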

\begin{proof}
For a control $\zeta$, we write the solution of \eqref{controlled2dks} with the given initial condition $v$ in terms of the control-to-state map as $\eta(\zeta;v)$, through which we define the reduced cost functional $\tilde{\mathcal{C}}_{s,\gamma}(\zeta) = \mathcal{C}_{s,\gamma}(\eta(\zeta;v),\zeta;\overline{\eta})$. Since $v \in H_0^2 \subset L_0^2$, the first part of Theorem \ref{existenceunThm1} implies that $(\eta(\zeta;v),\zeta) \in X_1 \times F_{\textrm{ad}}$. The optimal control problem can thus be recast as the problem of finding the minimiser of $\tilde{\mathcal{C}}_{s,\gamma}(\zeta)$ over $F_{\textrm{ad}}$. It makes sense to check that there exist $\zeta \in F_{\textrm{ad}}$ which give a finite value of $\tilde{\mathcal{C}}_{s,\gamma}$. The boundedness of the first and last terms of the cost functional \eqref{costfunctional1aa} are consequences of $(\eta(\zeta;v),\zeta) \in X_1 \times F_{\textrm{ad}}$ (recall that $s\leq 2$) and the regularity of $\overline{\eta}$. The final time component is the problematic term which requires the additional regularity of the initial condition: by the second part of Theorem \ref{existenceunThm1}, we know that $\eta|_{t=T} \in H_0^2 \subseteq H_0^s$, and since $\overline{\eta}|_{t=T} \in H_0^s$, the final time component of \eqref{costfunctional1aa} is also finite. Thus any forcing in the space of admissible controls yields a finite cost. Since $\mathcal{C}_{s,\gamma} \geq 0$, the reduced cost has a finite infimum,
\begin{equation}\inf_{\zeta \in F_{\textrm{ad}}} \tilde{\mathcal{C}}_{s,\gamma}(\zeta) = c \geq 0.\end{equation}
We cannot yet say that this infimum is attained by a control in the admissible space, however, we know that there is a minimising sequence $\{\zeta^{(n)}\}_{n=1}^{\infty} \subseteq F_{\textrm{ad}}$ with associated states defined by $\eta^{(n)} = \eta( \zeta^{(n)};v)$ such that
\begin{equation}\lim_{n \rightarrow \infty} \tilde{\mathcal{C}}_{s,\gamma}(\zeta^{(n)}) =  c.\end{equation}
Without loss of generality, we may assume that the corresponding sequence of costs is monotonically decreasing. From the form of the cost functional \eqref{costfunctional1aa}, we know that $\{\zeta^{(n)}\}_{n=1}^{\infty}$ is bounded uniformly in $L^2(0,T;L_0^2)$, i.e. there is some constant $r \geq 0$ ($ r = \tilde{\mathcal{C}}_{s,\gamma}(\zeta^{(1)}) $ suffices) such that
\begin{equation} \label{zetaball1} \| \zeta^{(n)} \|_{L^2(0,T;L_0^2)} \leq r.
\end{equation}
We define $K$ to be the intersection of $F_{\textrm{ad}}$ with the closed ball of radius $r$ in $L^2(0,T;L_0^2)$. $K$ is a closed, convex and bounded subset of the reflexive Banach space $L^2(0,T;L_0^2)$, and thus is weakly sequentially compact (see Thm 2.10 and Thm 2.11 in \cite{Trol1}). Then, the sequence $\{\zeta^{(n)}\}_{n=1}^{\infty} \subseteq K$ has a weakly convergent subsequence $\zeta^{(n)} \rightharpoonup \zeta^*$ for the topology of $L^2(0,T;L_0^2)$ (not relabelled for simplicity) with weak limit $\zeta^* \in K \subseteq F_{\textrm{ad}}$. For more general forms of cost functional, it may be necessary to assume that $F_{\textrm{ad}}$ is bounded to make this step. The function $\zeta^*$ is a candidate for the optimal control.

Multiplying the KSE \eqref{controlled2dks} by the solution and taking the spatial average, we obtain the inequality (see appendix \ref{Appendixboundderiv} for the derivation)
\begin{equation}\label{importantbound} \| \eta^{(n)} \|_{C^0([0,T]; L_0^2)} + \| \eta^{(n)} \|_{L^2(0,T;H_0^2)}  \leq C (  \| v \|_{L_0^2}  +  \|  \zeta^{(n)} \|_{L^2(0,T; L_0^2)}),
\end{equation}
where the constant $C$ depends on $\kappa$ and $T$ only. With this estimate and \eqref{zetaball1}, the sequence $\{\eta^{(n)}\}_{n=1}^{\infty}$ is uniformly bounded in $X_1$. Since $X_1$ is the dual of a separable Banach space\footnote{The predual of $X_1$ is the direct sum space $L^1(0,T; (L_0^2)^*) + L^2(0,T;(H_0^2)^*)$ where stars denote duals -- this is the direct sum of two separable Banach spaces and is thus separable Banach space itself (endowed with an appropriate norm).},
with the Banach--Alaoglu theorem, there is a subsequence (not relabelled for simplicity) with
\begin{equation}\label{weakconvresult1}\eta^{(n)} \rightharpoonup \eta^* \; \textrm{weakly in} \; L^2(0,T;H_0^2), \qquad \eta^{(n)} \rightharpoonup \eta^* \; \textrm{weak-star in} \; L^{\infty}(0,T;L_0^2).\end{equation}
It also follows from (\ref{zetaball1},\ref{importantbound}) that the sequence with terms
\begin{equation}\label{linearseq1}( \kappa - 1) \eta_{xx}^{(n)}  + \kappa  \eta_{yy}^{(n)} - \Delta^2 \eta^{(n)} \end{equation}
is uniformly bounded in $L^2(0,T,H_0^{-2})$ -- a Poincar\'{e} inequality for the continuous embedding of $H_0^2$ in $L_0^2$ is also utilised. To deal with the nonlinearity, we use the estimate
\begin{equation} \| \eta^{(n)} \eta^{(n)}_x \|_{H_0^{-1}} \leq \frac{1}{2} \| (\eta^{(n)})^2 \|_{L_0^{2}} \leq \frac{1}{2} \| \eta^{(n)} \|_{L^{\infty}} \| \eta^{(n)} \|_{L_0^{2}} \leq \widehat{C}\| \eta^{(n)} \|_{H_0^2} \| \eta^{(n)} \|_{L_0^{2}},\end{equation}
where the first inequality follows from $\eta^{(n)} \eta^{(n)}_x = \partial_x (\eta^{(n)})^2/2$ and the definition of the $H_0^{-1}$-norm (\ref{Hsnorm1aaa}a), and the last uses an Agmon inequality proved in Thm 4.1 of \cite{ilyin2006sharp} and again a Poincar\'{e} inequality for the embedding of $H_0^2$ in $L_0^2$. The constant $\widehat{C}$ involves the constants from both the Agmon and Poincar\'{e} inequalities. Squaring and integrating in time leads to
\begin{equation} \| \eta^{(n)} \eta^{(n)}_x \|_{L^2(0,T;H_0^{-1})}^2  \leq \widehat{C}^2 \int_0^T \| \eta^{(n)} \|_{H_0^2}^2 \| \eta^{(n)} \|_{L_0^{2}}^2 \; \mathrm{d}t \leq \widehat{C}^2C^4 (  \| v \|_{L_0^2}  +  r)^4,\end{equation}
where we have used (\ref{zetaball1},\ref{importantbound}). Then the sequence $\{ \eta^{(n)} \eta^{(n)}_x \}_{n=1}^{\infty}$ is bounded uniformly in $L^2(0,T,H_0^{-1})$, thus bounded uniformly in $L^2(0,T,H_0^{-2})$ again by a Poincar\'{e} inequality. With these results and the uniform boundedness of $\{\zeta^{(n)}\}_{n=1}^{\infty}$, we can conclude that $\{ \eta^{(n)}_t \}_{n=1}^{\infty}$ is bounded uniformly in $L^2(0,T,H_0^{-2})$. With this and the weak convergence result (\ref{weakconvresult1}a), it follows from Thm 8.1 in \cite{robinson2001infinite} (a compactness result) that
\begin{equation}\label{strongconvresult1}\eta^{(n)} \rightarrow \eta^* \; \textrm{strongly in} \; L^2(0,T;H_0^s), \textrm{ for } s < 2.\end{equation}

We now check that the weak limits satisfy the control-to-state map, $\eta^* = \eta(\zeta^*;v)$ -- the pair $(\eta^*,\zeta^*)$ must satisfy the governing equation \eqref{controlled2dks}, and also must satisfy the initial state, $\eta^*|_{t=0} = v$. The strong convergence result \eqref{strongconvresult1} with $s=1$ implies that the nonlinearity converges weakly in $L^2(0,T;(H_0^{2})^*)$, where the star denotes the dual, as follows: Let $w \in L^2(0,T;H_0^2)$, then
\begin{align}\nonumber \left| \int_0^T \int_Q (\eta^{(n)} \eta^{(n)}_x - \eta^* \eta^*_x ) w \;\mathrm{d}\bm{x} \; \mathrm{d}t \right| = & \;  \frac{1}{2} \left| \int_0^T \int_Q (\eta^{(n)} + \eta^* )  (\eta^{(n)} - \eta^* ) w_x \;\mathrm{d}\bm{x} \; \mathrm{d}t \right| \\
\nonumber \leq & \; \frac{1}{2}  \| \eta^{(n)} + \eta^* \|_{C^0([0,T];L_0^2)}  \| \eta^{(n)} - \eta^* \|_{L^2(0,T;L_0^4)}    \|  w_x \|_{L^2(0,T;L_0^4)} \\
\leq & \; \tilde{C}^2 C (  \| v \|_{L_0^2}  +  r)  \| \eta^{(n)} - \eta^* \|_{L^2(0,T;H_0^1)}    \|  w \|_{L^2(0,T;H_0^2)} \label{nonlinweakconverge222}
\end{align}
where $\tilde{C}$ is the constant corresponding to the continuous embedding of $H_0^1$ in $L_0^4$ in two space dimensions. The right hand side of \eqref{nonlinweakconverge222} converges to zero as $n \rightarrow \infty$, and thus 
\begin{equation} \eta^{(n)}  \eta^{(n)}_x \rightharpoonup \eta^* \eta^*_x \; \textrm{weakly in} \; L^2(0,T;(H_0^{2})^*).\end{equation}
Also, the sequence with terms \eqref{linearseq1} converges weakly to its corresponding optimal limit in the same space. Since $\{ \eta^{(n)}_t \}_{n=1}^{\infty}$ is bounded uniformly in $L^2(0,T,H_0^{-2}) \subset L^2(0,T,(H_0^{2})^*)$, it is weakly convergent, and furthermore by a density argument we can conclude that the weak limit is $\eta_t^*$. Since the sequence 
$\{\zeta^{(n)}\}_{n=1}^{\infty}$ converges weakly in $L^2(0,T;L_0^2) \subset L^2(0,T,(H_0^{2})^*)$, by uniqueness of weak limits we may conclude that
\begin{equation}
\eta_t^* + \eta^* \eta^*_x + (1- \kappa) \eta^*_{xx}  -  \kappa  \eta^*_{yy} + \Delta^2 \eta^*  =  \zeta^*,
\end{equation}
holds in the $L^2(0,T,(H_0^{2})^*)$-sense. The proof that the optimal state has initial condition $v$ follows similarly to the arguments in Thm 9.3 of \cite{robinson2001infinite}.

Lastly we show that the pair $(\eta^*,\zeta^*)$ is a minimiser of the cost functional. Since $\zeta^* \in F_{\textrm{ad}}$ and $v \in H_0^2$, we have the higher regularity $\eta^* \in X_2$. The weak lower semicontinuity of the individual components of the cost \eqref{costfunctional1aa} in the state and control respectively then yields
\begin{equation}\inf_{\zeta \in F_{\textrm{ad}}} \tilde{\mathcal{C}}_{s,\gamma}(\zeta) =  c = \lim_{n \rightarrow \infty} \mathcal{C}_{s,\gamma}(\eta^{(n)},\zeta^{(n)};\overline{\eta}) = \mathcal{C}_{s,\gamma}(\eta^*,\zeta^*;\overline{\eta}).\end{equation}
\end{proof}
We remark that the above theorem holds for $v \in L_0^2$ and $s \leq 0$ by a similar proof, and extension to any $s\in\mathbb{R}$ would be viable with analyticity results.

We now present the adjoint framework which allows the construction of such minimising sequences; this forms the basis of the iterative algorithm we will employ for our numerical simulations. The Lagrangian of the optimisation problem is
\begin{equation}\label{Lagrangian1aa}\mathcal{L}(\eta,\overline{\eta},\zeta, p) = \mathcal{C}_{s,\gamma}(\eta,\zeta;\overline{\eta}) - \left\langle p, \eta_t + \eta \eta_x + (1- \kappa) \eta_{xx}  -  \kappa  \eta_{yy} + \Delta^2 \eta  -  \zeta \right\rangle_{L^2(0,T;L_0^2)},
\end{equation}
where the $L_0^2$-inner product corresponds to the energy density norm defined by (\ref{Hsnorm1aaa}b). Here, $p$ is known as the adjoint variable which in effect is a Lagrange multiplier. The first-order conditions of optimality for a local optimiser $(\eta^*, \zeta^*, p^*)$ consist of the governing equation \eqref{controlled2dks}, an adjoint equation, and a variational inequality. The governing equation arises from the functional (Fr\'{e}chet) derivative of $\mathcal{L}$ with respect to the adjoint variable $p$, and taking the functional derivative of the Lagrangian with respect to $\eta$ gives the adjoint equation
\begin{equation}\label{Adjointeqn1} - p_t - (I - P_{\bm{0}})\eta p_x + (1-\kappa) p_{xx} - \kappa p_{yy} + \Delta^2 p = (-\Delta)^s ( \eta - \overline{\eta} )\end{equation}
where $I - P_{\bm{0}}$ is the projection onto the space of functions with zero mean and $(-\Delta)^s$ is the fractional Laplacian of order $2s$ with Fourier symbol $| \bm{\tilde{k}} |^{2s}$. Both \eqref{controlled2dks} and \eqref{Adjointeqn1} are necessarily satisfied by a local optimiser $(\eta^*, \zeta^*, p^*)$. The adjoint equation is backwards in time, and is supplied with the final time condition 
\begin{equation}\label{AdjointeqnfinaltimeBC}p(\bm{x},T) = (-\Delta)^s(\eta(\bm{x},T) - \overline{\eta}(\bm{x},T)).\end{equation}
Lastly, we obtain a variational inequality by taking the Fr\'{e}chet derivative with respect to $\zeta$,
\begin{equation}\label{variationineq1} \left\langle \gamma \zeta^* + p^*, \zeta - \zeta^*\right\rangle_{L^2(0,T;L_0^2)} \geq 0, \quad \forall \zeta \in F_{\textrm{ad}}.\end{equation}
This inequality informs us that the optimal control $\zeta^*$ is in the direction of $-(\gamma\zeta + p)$ from the current control $\zeta$, from which we may construct iteration schemes. Such updates move along the local approximation to the curve of steepest descent. The exclusion of the zero mode, $\bm{k} = 0$, in the $L^2(0,T;L_0^2)$-inner product is vital to ensuring that the control remains in the space of zero average functions. The projection $I - P_{\bm{0}}$ appearing in \eqref{Adjointeqn1} guarantees that the spatial average of the adjoint variable is fixed at zero ($\eta p_x$ is not a zero mean function in general). Lastly, the update direction $- (\gamma\zeta + p)$ also has zero spatial average, and thus any iteration scheme will yield a sequence of controls preserving this property.

Next we detail the numerical algorithm we employ to approximate a local optimiser $(\eta^*, \zeta^*, p^*)$; checking that this is the global optimiser for our infinite-dimensional problem is difficult. In simple terms, the forward--backward sweep method comprises of iterated simulations of \eqref{controlled2dks} and \eqref{Adjointeqn1}, with control updates after each iteration. Note that \eqref{Adjointeqn1} may be written in the form
\begin{equation}- p_t + \mathcal{A}p = \mathcal{B}'(p;\eta,\overline{\eta}), \qquad \mathcal{B}'(p;\eta,\overline{\eta}) = (I - P_{\bm{0}}) \eta p_x + (-\Delta)^s(\eta - \overline{\eta}) + cp,\end{equation}
where $\mathcal{A}$ is defined by (\ref{AandBdefns1}a). Thus, the BDF methods outlined in subsection~\ref{sec:KSE} are applicable to this backwards in time equation with the same value of $c$, provided that $\mathcal{B}'$ satisfies the necessary Lipschitz bounds. The pseudocode for the forward--backward sweeping method is given in Algorithm \ref{alg:forback}.
\begin{algorithm}[h]
\caption{Forward--backward sweep method with adaptive step-halving.}
\label{alg:forback}
\begin{algorithmic}
\STATE{Choose initial state $\eta_0(\bm{x})$, desired target state $\overline{\eta}(\bm{x},t)$ and initial control guess $\zeta^{(1)}(\bm{x},t)$ for $t\in [0,T]$. Take $c^{(1)} \in (0,1)$, set $c^{(n)} = c^{(1)}$ for $n \in \mathbb{N}_{>1}$, and choose a tolerance $\tau$.}
\STATE{Initialise with $n = 0$, $\mathcal{C}_{s,\gamma}(0) = 0$, $d = \tau + 1$.}
\WHILE{$d > \tau$}
\STATE{$n = n+1$.}
\STATE{Solve \eqref{controlled2dks} with initial state $\eta_0$ and control $\zeta^{(n)}$ to obtain $\eta^{(n)}(\bm{x},t)$ for $t\in [0,T]$.}
\STATE{Calculate $\mathcal{C}_{s,\gamma}(n):=\mathcal{C}_{s,\gamma}(\eta^{(n)},\zeta^{(n)};\overline{\eta})$ and update $d = |\mathcal{C}_{s,\gamma}(n)-\mathcal{C}_{s,\gamma}(n-1)|$.}
\STATE{Solve \eqref{Adjointeqn1} backwards given final time condition \eqref{AdjointeqnfinaltimeBC} and state $\eta^{(n)}$ to obtain $p^{(n)}(\bm{x},t)$ for $t\in [0,T]$. }
\STATE{Update control $\zeta^{(n+1)} =(1 - c^{(n)})\zeta^{(n)} - c^{(n)} p^{(n)}/\gamma$. }
\STATE{Set $\mathcal{C}_{s,\gamma}(n+1) = \mathcal{C}_{s,\gamma}(n)+1$.}
\WHILE{$\mathcal{C}_{s,\gamma}(n+1) \geq \mathcal{C}_{s,\gamma}(n)$}
\STATE{Solve \eqref{controlled2dks} with initial state $\eta_0$ and control $\zeta^{(n+1)}$ to obtain $\eta^{(n+1)}(\bm{x},t)$ for $t\in [0,T]$. Calculate $\mathcal{C}_{s,\gamma}(n+1):=\mathcal{C}_{s,\gamma}(\eta^{(n+1)},\zeta^{(n+1)};\overline{\eta})$.}
\IF{$\mathcal{C}_{s,\gamma}(n+1) \geq \mathcal{C}_{s,\gamma}(n)$}
\STATE{$c^{(m)} = c^{(m)}/2$ for $m \geq n$.}
\STATE{Update control $\zeta^{(n+1)} =(1 - c^{(n)})\zeta^{(n)} - c^{(n)} p^{(n)}/\gamma$.}
\ENDIF
\ENDWHILE
\ENDWHILE
\RETURN $\eta^{(n)}$, $\zeta^{(n)}$.
\end{algorithmic}
\end{algorithm}
The returned values $(\eta^{(n)},\zeta^{(n)})$ are a minimising sequence as in Theorem \ref{thm4.2}, with limit approximating the optimal state and control pair, $(\eta^*,\zeta^*)$. For our control update, we take a step of size $c^{(n)}/\gamma$ in the direction of $- (\gamma \zeta^{(n)} + p^{(n)})$,
\begin{equation}\zeta^{(n+1)} = \zeta^{(n)} - c^{(n)} ( \zeta^{(n)}\gamma  + p^{(n)}) /\gamma = (1 - c^{(n)})\zeta^{(n)} - c^{(n)} p^{(n)}/\gamma.\end{equation}
The above standard gradient descent can be rewritten as the convex combination of the current iterations of the control and adjoint variables. This classical method is found to perform consistently well compared to more complicated methods which may converge too quickly \cite{lenhart2007optimal}. Rather than performing an expensive line search across a range of convex combinations (i.e. seeking a minimum as $c^{(n)}$ is varied), we employ an adaptive step-halving scheme (a backtracking line search) for our simulations starting with steps of size $c^{(1)} = 0.1$ (see the update formula in Algorithm \ref{alg:forback}); if a control update results in an increased cost, then the step size (and the successive step sizes) are halved until the updated control yields a lower cost than the previous iteration. The sequence of values of the cost functional can be used to indicate a posteriori that we have achieved convergence to a local minimiser. We have checked the results of our numerical schemes with more complicated updating and searching methods with good agreement.


\begin{figure}[h!]
\centering
\begin{subfigure}{2.9in}
\caption{ Profile of $\overline{\eta}$.} 
\includegraphics[width=2.9in]{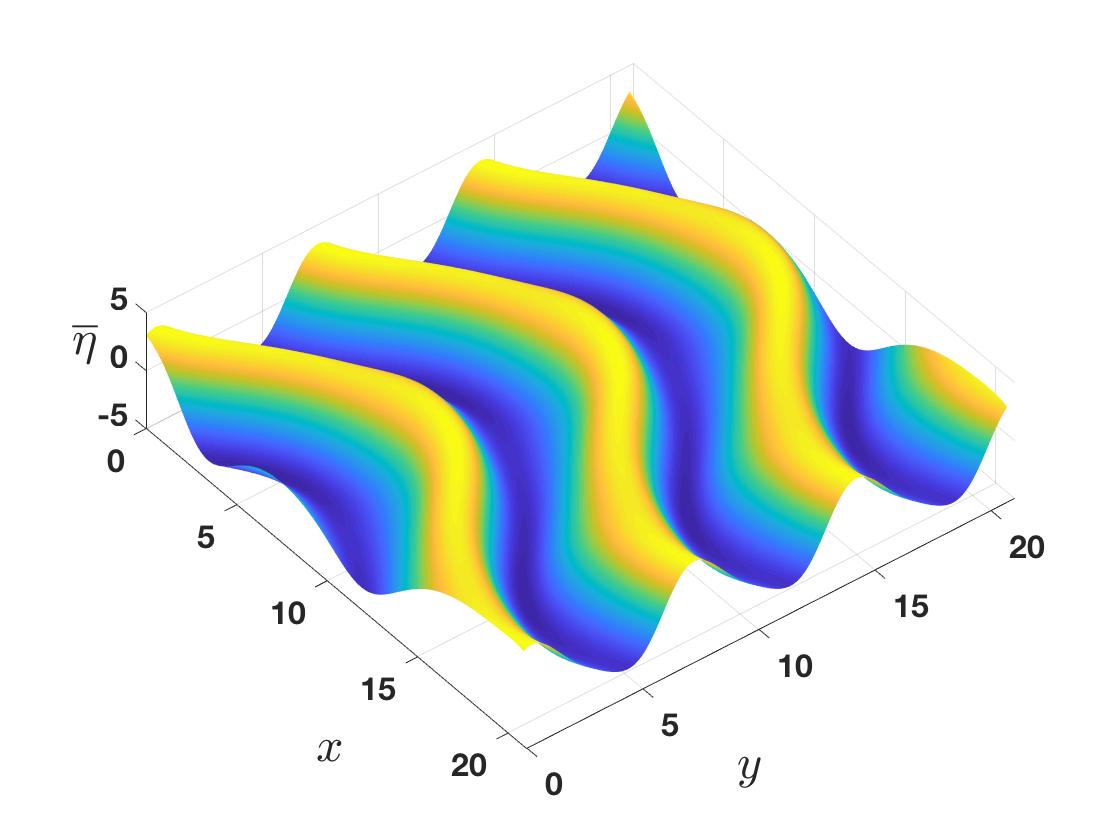}
\end{subfigure}
\begin{subfigure}{2.9in}
\caption{ Profile of $\eta^* - \overline{\eta}$ at $T = 5$.} 
\includegraphics[width=2.9in]{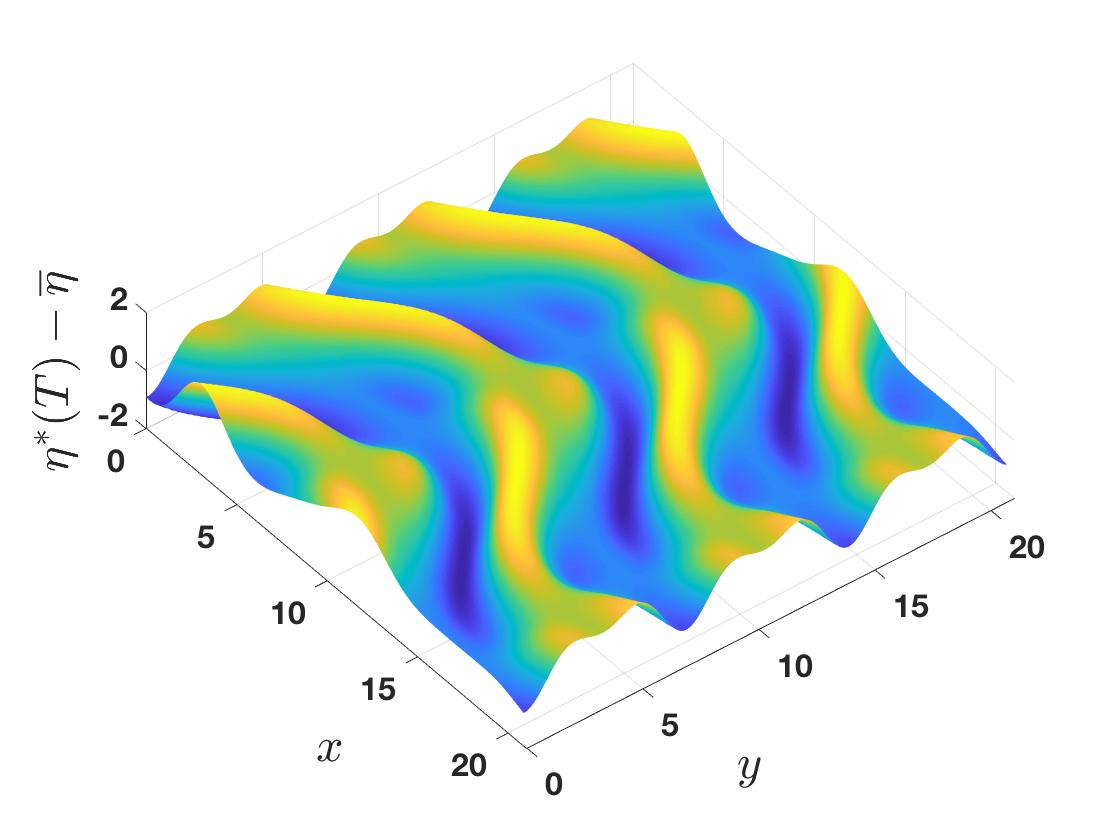}
\end{subfigure}
\begin{subfigure}{2.9in}
\caption{ Components of $\mathcal{C}_{0,1}^*$.} 
\includegraphics[width=2.9in]{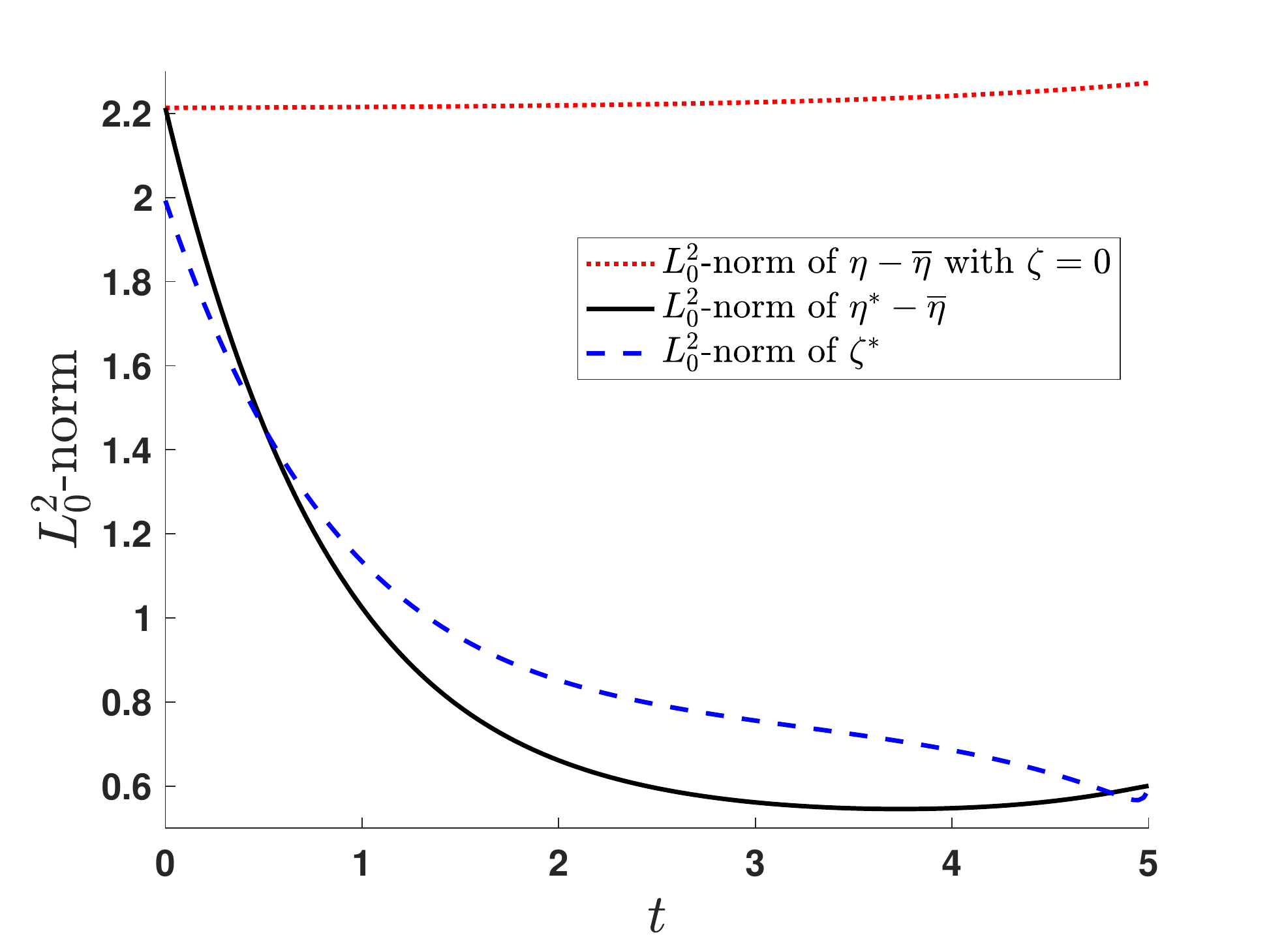}
\end{subfigure}
\begin{subfigure}{2.9in}
\caption{ Minimising sequence of costs.} 
\includegraphics[width=2.9in]{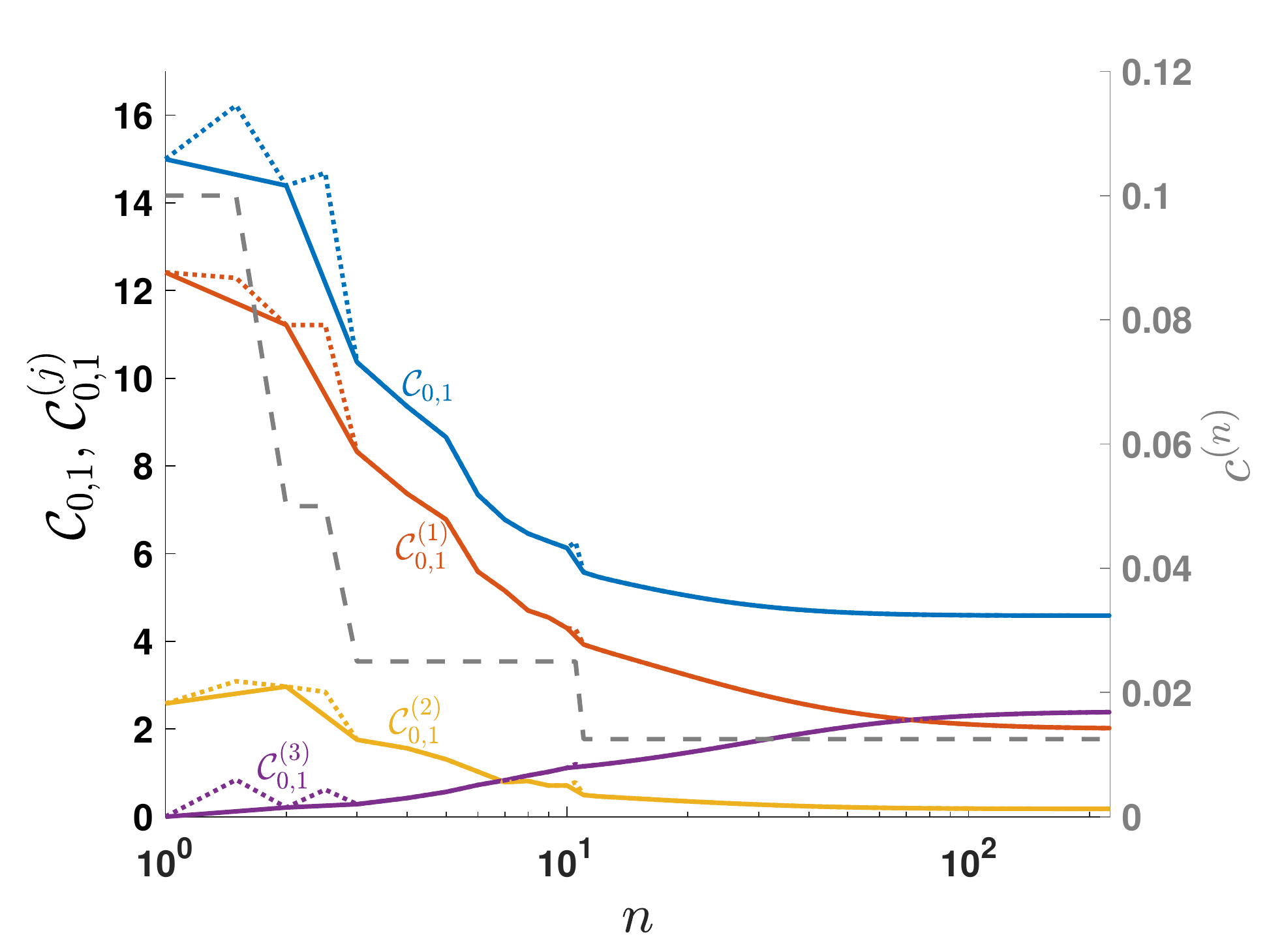}
\end{subfigure}
\caption{Full optimal control for a hanging film flow with $\kappa = -0.5$ and $L_1 = L_2 = 21$ with $s=0$ and $\gamma = 1$ over the time interval $[0,5]$ from initial condition \eqref{initicond1}. (a) The desired state, a ``snaking" transverse wave. (b) The difference between the optimal and desired states at the final time. (c) Components of the cost functional. The $L_0^2$-norms of $\eta - \overline{\eta}$ for both the uncontrolled (dotted line) and optimally controlled (solid line) cases. The $L_0^2$-norm of the optimal control is also included (dashed line). These are the integrands in $C_{0,1}^{(1)*}$ and $C_{0,1}^{(3)*}$. (d) The minimising sequence of costs (also broken down into its three components) against $n$ with solid lines. The dotted lines are with points at half-integers represent the gradient descent steps which resulted in larger costs, resulting in step-halving; the faded dashed line taking values on the right axis is the step $c^{(n)}$.} \label{optimalExp2fig}
\end{figure}
For the first numerical experiment, we take $L_1 = L_2 = 21$, and cover the three different dynamical regimes and two critical points of the system with $\kappa \in \{- 0.5,0, 0.5, 1,1.5\}$. We again take the initial condition defined by \eqref{initicond1}, and choose the desired state $\overline{\eta}$ to be the ``snaking" transverse wave shown in panel (a) of Figure \ref{optimalExp2fig}; this corresponds to a steady solution of the non-local 2D KSE studied in \cite{tomlin_papageorgiou_pavliotis_2017}, shown in their Figure 6(c). Controls are applied until the final time $T=5$, and we also take parameters $s=0$ and $\gamma = 1$ for the cost functional. The case of zero forcing is used as the initial control guess, $\zeta^{(1)} = 0$, thus allowing a cost comparison between the uncontrolled and optimally controlled systems. The results of the forward--backward optimisation procedure (Algorithm \ref{alg:forback}) for $\kappa = -0.5$ are shown in panels (b--d) of Figure \ref{optimalExp2fig}. Panel (b) plots the difference $\eta^* -\overline{\eta}$ at the final time, which is visibly $O(1)$; the $L^{\infty}$-norm of this surface is $1.13$, with the corresponding value at $t=0$ being $3.73$, and taking a minimum value of approximately $1.10$ across the whole time interval. This large deviation is expected since the desired state is a non-solution to the 2D KSE \eqref{controlled2dks}, and the time interval over which we apply controls is relatively short. Panel (c) compares the cost functional integrands for the uncontrolled and optimally controlled cases. The uncontrolled solution, shown with the dotted line in panel (c), is seen to grow even over the course of $5$ time units, with cost $15.00$. The $L_0^2$-norms of $\zeta^*$ and $\eta^*-\overline{\eta}$ are decreasing for the majority of the time interval. The evolution of $\eta^*$, $\eta^* - \overline{\eta}$ and $\zeta^*$ for this case are presented in Supplementary Movie 1 available at \url{https://youtu.be/WczsshjrKW0}, and we note that the observed optimal control is not too far from a proportional control, i.e. $ - \zeta^* \sim \eta^* - \overline{\eta}$. The sequence of costs attained by the minimising sequence $(\eta^{(n)},\zeta^{(n)})$ is displayed in Figure \ref{optimalExp2fig}(d), including the breakdown into the three components $\mathcal{C}_{0,1}^{(j)}(n)$ for $j= 1,2,3$. The dotted lines with half integer values correspond to the cases where the updated control results in an increased value of the cost functional, and the step size is halved, as shown in the plot of $c^{(n)}$ (faded dashed line) taking values on the right axis. 
\begin{table}[tbhp]
  \caption{Comparison of results for a range of $\kappa$.}
  \label{tab:simpletable}
  \centering
  \begin{tabular}{|c|c|c|c|c|c|c|} \hline
   $\kappa$ & -0.5 & 0 & 0.5 & 1 & 1.5 \\ \hline
    $\mathcal{C}_{0,1}^{(1)}$ for $\zeta = 0$ &    12.42 & 12.27 & 12.23 & 12.22 & 12.21 \\ \hline
    $\mathcal{C}_{0,1}^{(2)}$ for $\zeta = 0$ &  2.58 & 2.46 & 2.44 & 2.44 & 2.44 \\ \hline
    $\mathcal{C}_{0,1}$ for $\zeta = 0$  &  15.00 & 14.73 & 14.68 & 14.66 & 14.65 \\ \hline
    $\mathcal{C}_{0,1}^{(1)*}$  & 2.01 & 3.91 & 6.07 & 7.78 &  8.97 \\ \hline
    $\mathcal{C}_{0,1}^{(2)*}$   & 0.18 & 0.56  & 0.95 & 1.24 & 1.46 \\ \hline
    $\mathcal{C}_{0,1}^{(3)*}$   & 2.40 & 3.19 & 2.95 & 2.40 & 1.89 \\ \hline
    $\mathcal{C}_{0,1}^*$  & 4.59 & 7.66 & 9.97 & 11.42 & 12.31\\ \hline
     $n(\tau=5\times10^{-6})$   & 225 & 68 & 45 & 46 & 48 \\ \hline
    $\min_{t\in[0,T]} \| \eta^* - \overline{\eta} \|_{L^{\infty}}$   & 1.09 & 1.90 & 2.32 & 2.58 & 2.74 \\ \hline
  \end{tabular}
\end{table}
The cost breakdowns for the uncontrolled and optimally controlled cases are given in Table \ref{tab:simpletable} for each $\kappa$. Rows $2$ to $4$ correspond to the uncontrolled case, with cost components $\mathcal{C}_{0,1}^{(1)}$, $\mathcal{C}_{0,1}^{(2)}$ (note that $\mathcal{C}_{0,1}^{(3)} = 0$ for $\zeta = 0$) and total cost $\mathcal{C}_{0,1}$ -- these are observed to be decreasing functions of $\kappa$. The next four rows give the optimal cost and its breakdown into its three components; these asymptote values were obtained by fitting the sequences of costs to a function of the form $a e^{bn} + c$. The increase in the optimal cost with $\kappa$ over these examples may be explained as follows: Since we start from small amplitude initial conditions and such a short time interval $[0,T]$ for optimisation, the linear instabilities work in favour of controlling the solution to the desired state; for $\kappa = 1.5$, the control must be strong since the zero solution is exponentially stable. However, we expect a turning point in this behaviour as $\kappa$ becomes very negative (dependent on $T$), when the cost of controlling the linear instabilities outweighs the control cost saved due to the linear instabilities aiding the solution growth towards $\overline{\eta}$. Row 9 in Table \ref{tab:simpletable} gives the number of iterations required to ensure that the change in the minimising sequence of costs is below the desired tolerance $\tau = 5 \times 10^{-6}$; this becomes very large as $\kappa$ decreases further beyond $-0.5$. Finally, the minimum of the $L^{\infty}$-norm of $\eta - \eta^*$ across the whole time interval is given in row 10 -- the optimal controls approach the desired state more closely in the $L^{\infty}$-sense as $\kappa$ decreases (for the range of numerical simulations we completed). We also performed numerical experiments over a range of $s$ and $\gamma$, although not shown here, and we report that the dependence of the optimal cost $\mathcal{C}_{s,\gamma}^*$ on these parameters was the same as in the linear case shown in Figure \ref{costfunscatter}, i.e. an decreasing function of $s$ and an increasing function of $\gamma$; the same trend was observed for the individual components, $\mathcal{C}_{s,\gamma}^{(j)*}$ for $j = 1,2,3$.

\begin{figure}[]
\centering
\begin{subfigure}{2.9in}
\caption{Minimising sequences of costs.} 
\includegraphics[width=2.9in]{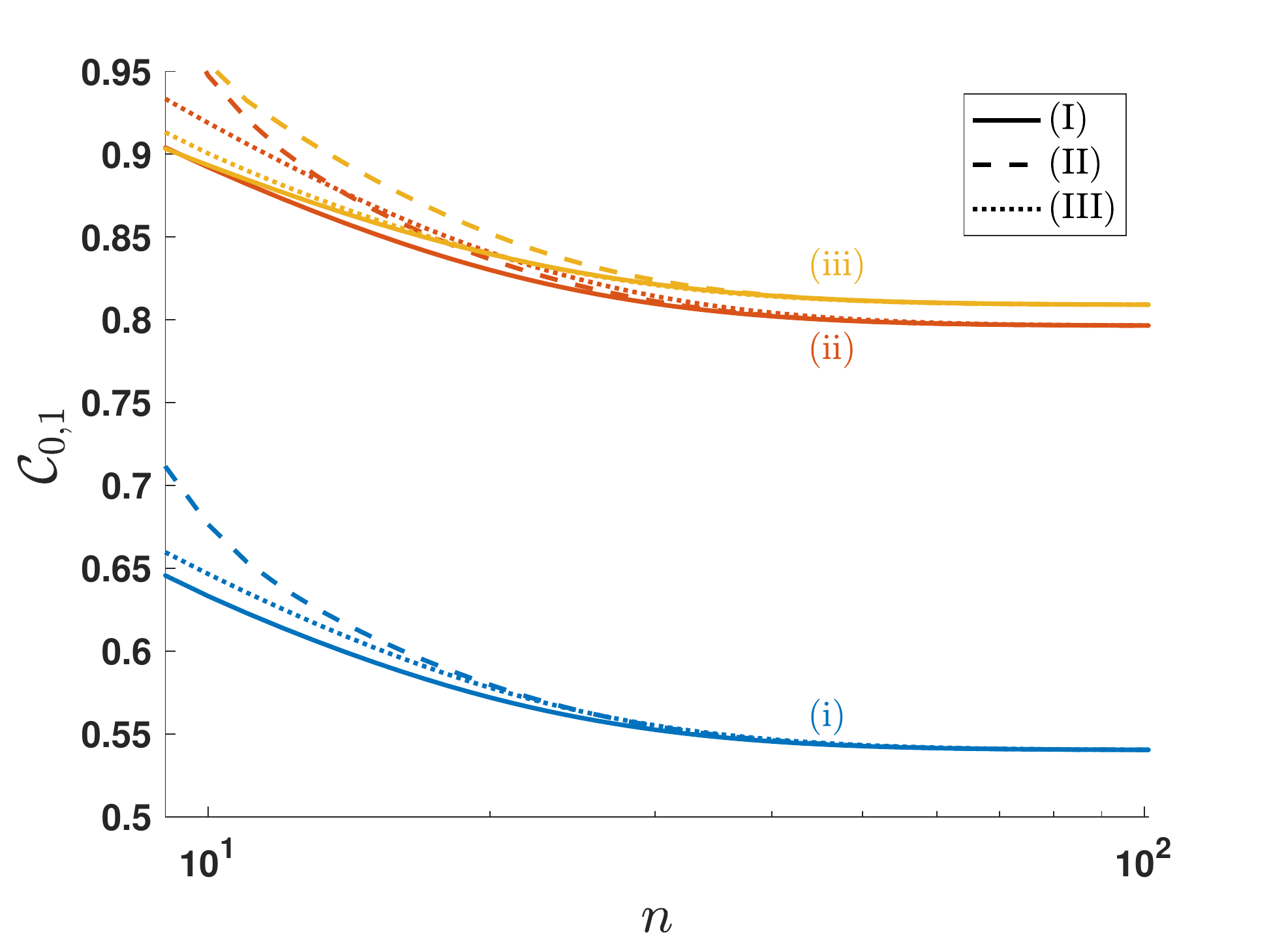}
\end{subfigure}
\begin{subfigure}{2.9in}
\caption{ Components of $\mathcal{C}_{0,1}^*$ for case (i).} 
\includegraphics[width=2.9in]{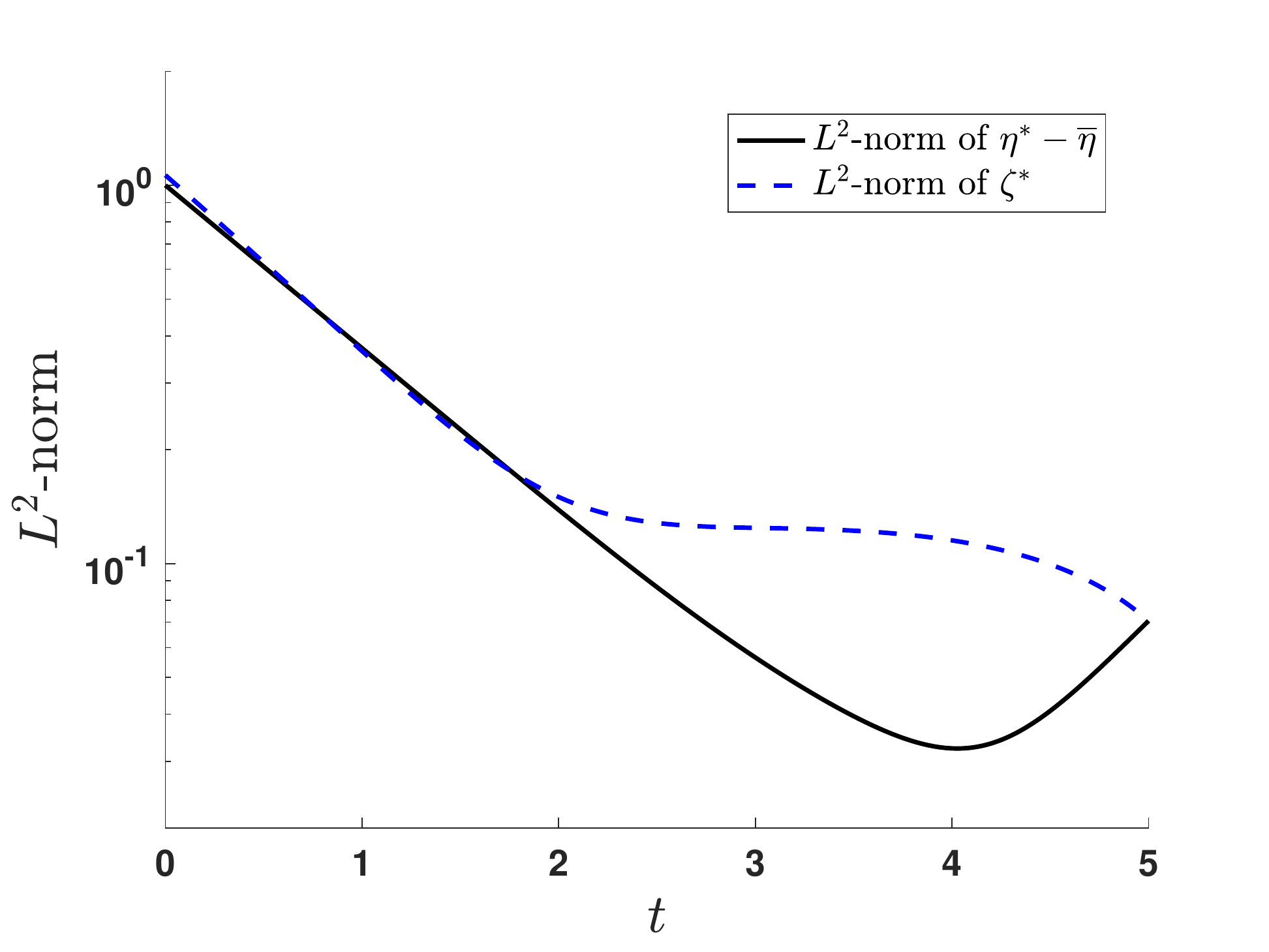}
\end{subfigure}
\begin{subfigure}{2.9in}
\caption{ Components of $\mathcal{C}_{0,1}^*$ for case (ii).} 
\includegraphics[width=2.9in]{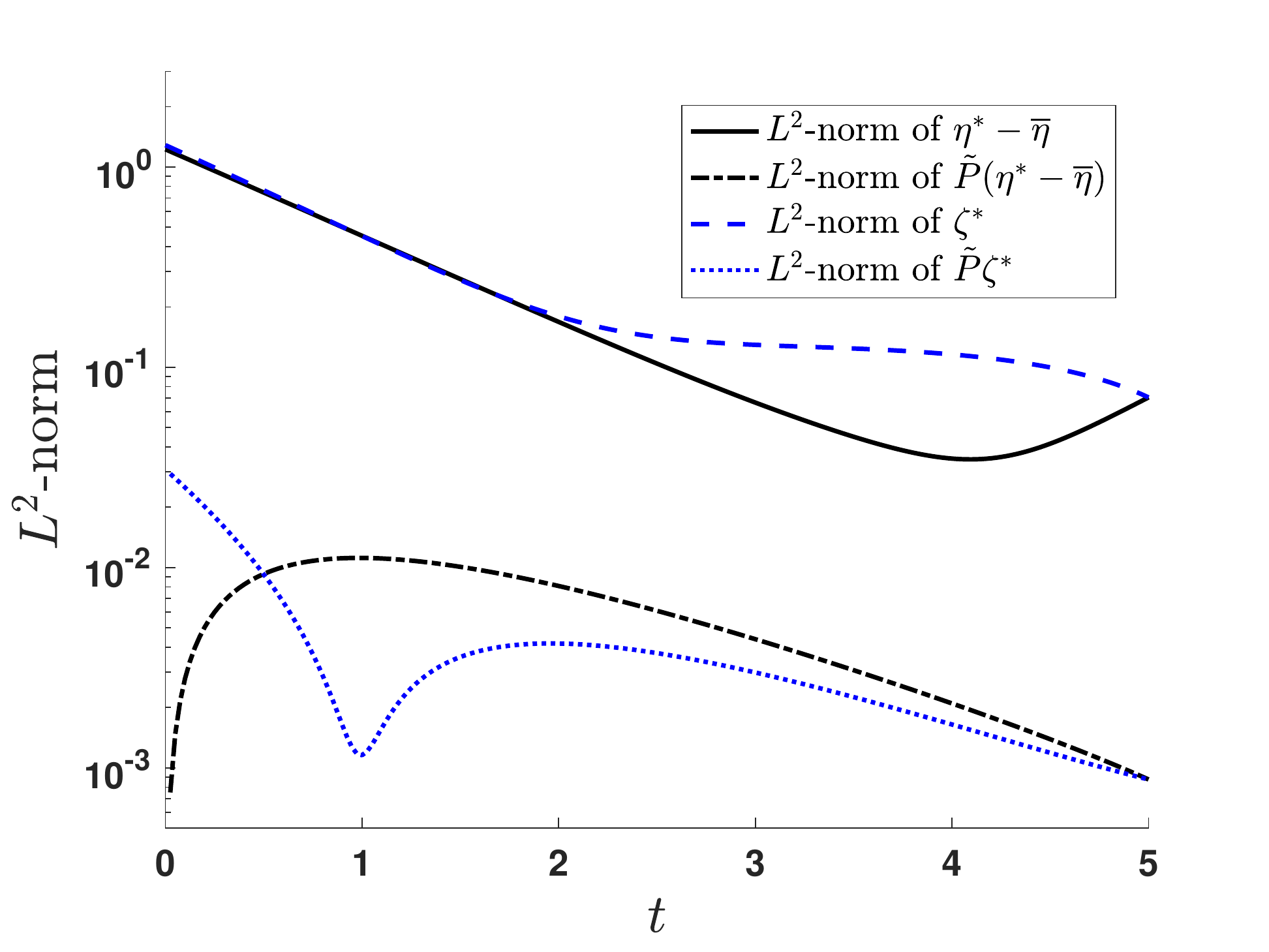}
\end{subfigure}
\begin{subfigure}{2.9in}
\caption{ Components of $\mathcal{C}_{0,1}^*$ for case (iii).} 
\includegraphics[width=2.9in]{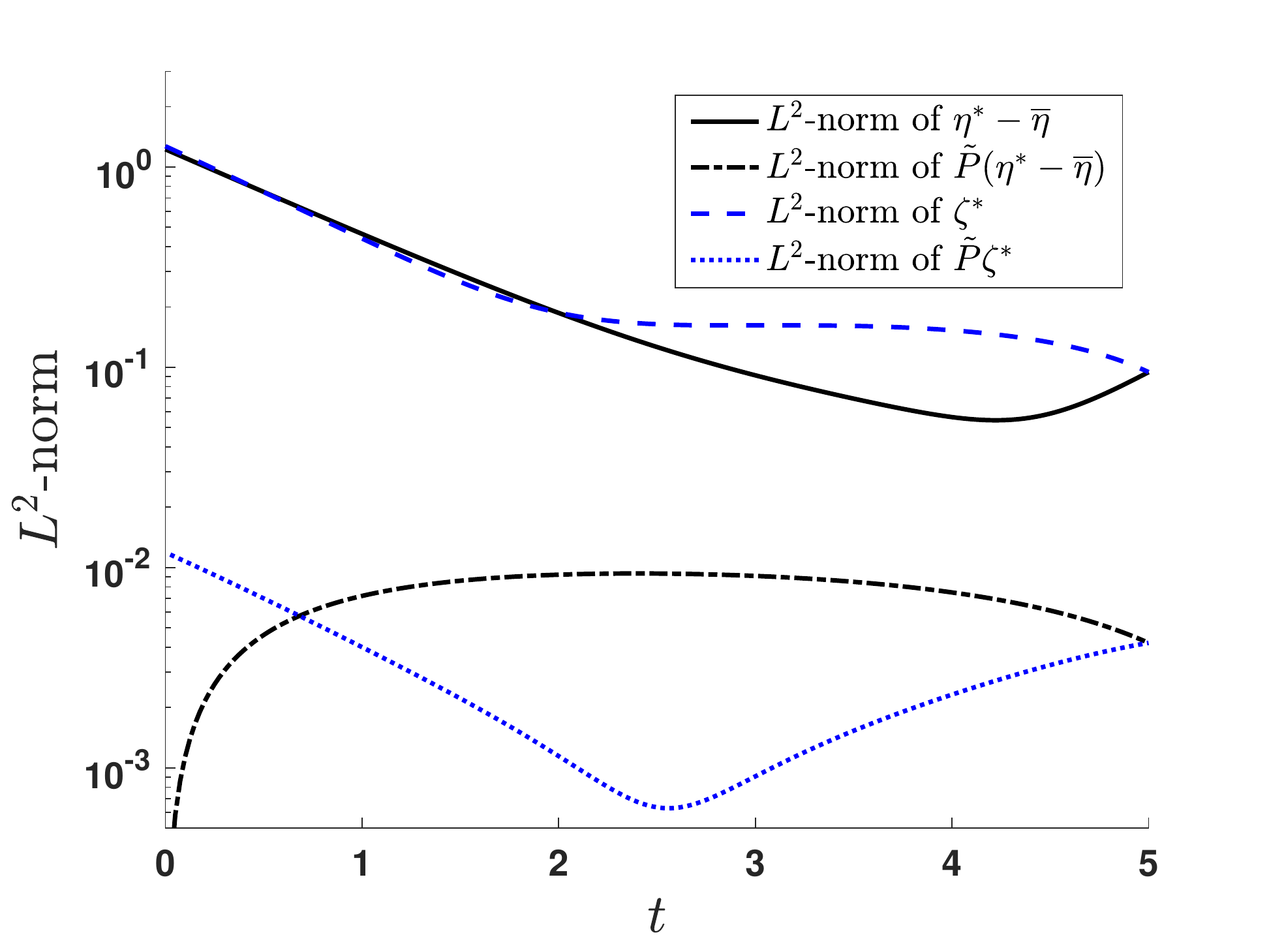}
\end{subfigure}
\caption{Full optimal control for a vertical film flow ($\kappa = 0$) and $L_1 = 32$, $L_2 = 21$ with $s=0$ and $\gamma = 1$ over the time interval $[0,5]$ for the range of initial conditions and desired states given in \eqref{init_desstate_FOC}.} \label{optimalExp3fig}
\end{figure}
For the second numerical experiment, we consider a vertical film ($\kappa = 0$) and lengths $L_1 = 32$ and $L_2 = 21$. We consider three cases, (i),(ii), and (iii), of initial conditions and time-dependent desired states,
\begin{equation}\label{init_desstate_FOC}v = \sin\left(\frac{4\pi x}{L_1} \right) + \delta_{\textrm{(ii)}}\cos\left(\frac{2\pi x}{L_1} + \frac{2\pi y}{L_2} \right), \quad \overline{\eta} =  \sin\left(\frac{2\pi (x - t)}{L_1} \right) + \delta_{\textrm{(iii)}} \sin\left(\frac{2\pi x}{L_1} + \frac{2\pi y}{L_2} \right),\end{equation}
where $\delta_{\textrm{(ii)}}$ and $\delta_{\textrm{(iii)}}$ are $1$ for their respective cases and zero otherwise.
In case (i), both initial condition and desired state are spatially 1D, whereas in cases (ii) and (iii), a mixed-mode term is added to either $v$ or $\overline{\eta}$, respectively. No transverse modes are included in $v$ or $\overline{\eta}$ in any of the cases, neither are these modes linearly unstable. We additionally consider a variety of initial control guesses,
\begin{equation}\label{threechoiceszeta1}\zeta^{(1)}_{\textrm{(I)}} =  0 , \quad  \zeta^{(1)}_{\textrm{(II)}} =  \cos\left(\frac{2\pi y}{L_2} \right), \quad \zeta^{(1)}_{\textrm{(III)}} =  \sin\left(\frac{2\pi x}{L_1} + \frac{2\pi y}{L_2} \right).
\end{equation}
We also fix the final time $T=5$, and set control parameters $s=0$ and $\gamma =1$ as in the previous numerical experiment. In Figure \ref{optimalExp3fig}(a), we show a portion of the minimising sequence of costs for the three cases of $(v,\overline{\eta})$, each with the three choices of $\zeta^{(1)}$ given in \eqref{threechoiceszeta1}. The limits of the sequences are independent of the initial control guess, and we confirm from inspection of the numerical solution that the approximations of $\zeta^*$ also agree (this is not necessarily implied by the previous statement). This lends credence to the possibility that the obtained optimisers are global minima of the cost functionals. Figure \ref{optimalExp3fig}(b,c,d) plots the $L_0^2$-norms of the optimal control and state over $[0,T]$ for the three cases. For case (i), both $\zeta^*$ (and $\eta^*$) remain 1D for the entire time interval; the optimal control obtained here is the same that would be obtained from the 1D simplification of the control problem. In cases (ii) and (iii), two-dimensionality enters into the problem via the initial condition and desired state, respectively, through the addition of a mixed mode term. We find that not only mixed modes appear in $\zeta^*$, but the projection onto the transverse modes is also non-trivial. For these two cases, the $L_0^2$-norms of the projections $\tilde{P}(\eta^* - \overline{\eta})$ and $\tilde{P}\zeta^*$ are included in Figure \ref{optimalExp3fig}(c,d), and visualisations of the time evolution are given in Supplementary Movies 2 (available at \url{https://youtu.be/V24mi-C8n1g}) and 3 (available at \url{https://youtu.be/hfcmplP83fQ}), respectively. This can be understood with the adjoint equation \eqref{Adjointeqn1}, where it can be seen that mixed mode and streamwise mode activity in the the solution excites the purely transverse modes in the adjoint variable. With these results, we make the following conjecture about the spatial dimension of the optimal state and control pair:
\begin{conj}\label{conj1}Assume that $F_{\textrm{ad}}$ cannot be spanned by functions of $x$ and $t$ alone ($F_{\textrm{ad}}$ contains functions dependent on $y$). The optimal control $\zeta^*$ (and $\eta^*$) is independent of $y$ if and only if $v$ and $\overline{\eta}$ are independent of $y$. Moreover, if $F_{\textrm{ad}}$ contains functions which are independent of $x$, i.e. transverse modes, and $\zeta^*$ is dependent on $y$, then $\tilde{P}\zeta^*$ is non-zero.
\end{conj}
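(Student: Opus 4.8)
The plan is to exploit the invariance of \eqref{controlled2dks} and of the cost functional \eqref{costfunctional1aa} under $y$-translations, together with the orthogonal splitting of a $Q$-periodic zero-mean function into its $y$-average and its remainder. I write $\Pi$ for the projection onto $y$-independent functions (the $\tilde k_2=0$ Fourier modes, i.e.\ averaging over $y$) and $\Pi^{\perp}=I-\Pi$, so that $\|\eta\|_{H_0^s}^2=\|\Pi\eta\|_{H_0^s}^2+\|\Pi^{\perp}\eta\|_{H_0^s}^2$ and similarly for the $L^2(0,T;\cdot)$-norms; note that $\tilde P$ (transverse modes, $\tilde k_1=0$) annihilates every $y$-independent function but, conversely, $\tilde P\zeta=0$ does not force $\zeta$ to be $y$-independent, which is exactly what the ``moreover'' clause concerns. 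All three assertions use the first-order optimality system \eqref{controlled2dks}, \eqref{Adjointeqn1}, \eqref{AdjointeqnfinaltimeBC}, \eqref{variationineq1}, which a minimiser of Theorem~\ref{thm4.2} satisfies (with $v\in H_0^2$ one has $\eta^{*}\in X_2$, making the Lagrangian derivation of \eqref{Adjointeqn1}--\eqref{variationineq1} rigorous). For definiteness I take $F_{\textrm{ad}}=L^2(0,T;L_0^2)$, so that \eqref{variationineq1} collapses to $\zeta^{*}=-p^{*}/\gamma$; a genuinely constrained convex $F_{\textrm{ad}}$ would require keeping the variational inequality and is a secondary difficulty.

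For the necessity direction ($y$-independent optimum $\Rightarrow$ $y$-independent data) I would note first that $v=\eta^{*}(\cdot,0)$ inherits $y$-independence from $\eta^{*}$. Then, since $p^{*}=-\gamma\zeta^{*}$ is $y$-independent, every term on the left of \eqref{Adjointeqn1} is $y$-independent (the linear terms are Fourier multipliers, $I-P_{\bm 0}$ preserves $y$-independence, and $\eta^{*}p^{*}_{x}$ is a product of $y$-independent functions), so $(-\Delta)^{s}(\eta^{*}-\overline\eta)$ is $y$-independent; because $(-\Delta)^{s}$ acts on zero-mean functions through the nonvanishing symbol $|\tilde{\bm k}|^{2s}$ it is invertible there, whence $\eta^{*}-\overline\eta$, and therefore $\overline\eta$, is $y$-independent ($\overline\eta(\cdot,T)$ follows from \eqref{AdjointeqnfinaltimeBC}). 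This direction should be complete modulo the standing regularity assumptions.

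The sufficiency direction is the substantive one. Let $m_{\mathrm{1D}}$ denote the infimum of the cost for the one-dimensional problem — the restriction of \eqref{controlled2dks} to $y$-independent data, which is the forced $1$D KSE $\eta_t+\eta\eta_x+(1-\kappa)\eta_{xx}+\eta_{xxxx}=\zeta$, possessing a minimiser by the $1$D analogue of Theorem~\ref{thm4.2}; write $\mathcal C^{\mathrm{1D}}_{s,\gamma}$ for the associated $1$D reduced cost. Since a $y$-independent control fed into \eqref{controlled2dks} with the $y$-independent $v$ yields, by uniqueness in Theorem~\ref{existenceunThm1}, a $y$-independent state of equal cost, one has $m_{\mathrm{2D}}:=\inf_{F_{\textrm{ad}}}\tilde{\mathcal C}_{s,\gamma}\le m_{\mathrm{1D}}$, and it remains to show $m_{\mathrm{2D}}\ge m_{\mathrm{1D}}$ with equality attained only by $y$-independent controls. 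Given any admissible $(\eta,\zeta)$, I would set $a=\Pi\eta$, $b=\Pi^{\perp}\eta$, $\bar\zeta=\Pi\zeta$, $\zeta_{\perp}=\Pi^{\perp}\zeta$; averaging \eqref{controlled2dks} over $y$ shows $a$ solves the $1$D KSE with control $\psi:=\bar\zeta-g$, where $g:=\partial_x\Pi(b^2/2)$ is the quadratic feedback of the $y$-dependent part onto the mean, and a direct computation then gives
\[
\mathcal C_{s,\gamma}(\eta,\zeta;\overline\eta)=\mathcal C^{\mathrm{1D}}_{s,\gamma}(\psi)+\tfrac12\|b\|_{L^2(0,T;H_0^s)}^2+\tfrac12\|b(\cdot,T)\|_{H_0^s}^2+\tfrac\gamma2\|\zeta_{\perp}\|_{L^2(0,T;L_0^2)}^2+\gamma\langle\bar\zeta,g\rangle_{L^2(0,T;L_0^2)}-\tfrac\gamma2\|g\|_{L^2(0,T;L_0^2)}^2 .
\]
The first four terms are $\ge m_{\mathrm{1D}}\ge 0$; the enemy is the sign-indefinite cross-term $\gamma\langle\bar\zeta,g\rangle$. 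The plan is to bound $b$ by $\zeta_{\perp}$ via an energy estimate for the $\Pi^{\perp}$-projected equation (exploiting $b(\cdot,0)=0$, a Gronwall argument to absorb the reaction term $\tfrac12\int a_x b^2$, and the $X_2$-bound on $\eta$), giving $\|b\|_{L^2(0,T;H_0^2)}\le C\|\zeta_{\perp}\|_{L^2(0,T;L_0^2)}$ and then, by the two-dimensional product estimate $\|g(t)\|_{L_0^2}\le C\|b(t)\|_{H_0^2}^2$, $\|g\|_{L^2(0,T;L_0^2)}\le C\|\zeta_{\perp}\|_{L^2(0,T;L_0^2)}^2$; the cross-term $|\gamma\langle\bar\zeta,g\rangle|\le C\gamma\|\bar\zeta\|\,\|\zeta_{\perp}\|^2$ is then dominated by $\tfrac\gamma2\|\zeta_{\perp}\|^2$ once $\|\bar\zeta\|,\|\zeta_{\perp}\|$ are small, which holds automatically for a candidate minimiser as soon as $m_{\mathrm{1D}}/\gamma$ is small. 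In that regime $m_{\mathrm{2D}}=m_{\mathrm{1D}}$, equality forces $\zeta_{\perp}=0$ and $\psi$ $1$D-optimal, hence $b\equiv 0$ and $\zeta=\psi$ is $y$-independent, which is the claim. The hard part will be removing the small-authority assumption: the feedback $g$ can conceivably let a $y$-dependent perturbation steer the mean towards $\overline\eta$ more cheaply than direct forcing — exactly the nonlinear cross-scale transfer exploited in section~\ref{TranwaveeffectsSec} — so an unconditional proof appears to need a quantitative strict convexity of $\tilde{\mathcal C}_{s,\gamma}$ transverse to the $1$D manifold, or a symmetrisation bypassing $g$.

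Finally, for the ``moreover'' clause, note that by the equivalence just established it has content only when $v$ or $\overline\eta$ is $y$-dependent, in which case $\zeta^{*}$ is automatically $y$-dependent and one must show $\tilde P\zeta^{*}\ne 0$. I would argue by contradiction: if $\tilde P\zeta^{*}=0$ then $\tilde Pp^{*}=0$; if $\tilde Pv$ contains a linearly unstable transverse mode, \eqref{Fouriercoefftranssystem1earlier} with $\tilde\zeta^{*}=0$ makes $\tilde P\eta^{*}$ blow up so $\mathcal C_{s,\gamma}=\infty$ and $\zeta^{*}$ is not optimal; otherwise $\tilde P\eta^{*}(t)$ is the explicit decaying linear evolution of $\tilde Pv$. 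Projecting \eqref{Adjointeqn1} with $\tilde P$ (which commutes with the linear multipliers, and $\tilde P(I-P_{\bm 0})=\tilde P$) and using $\tilde Pp^{*}=0$ kills all linear and time-derivative terms, leaving the over-determined constraint $\tilde P(\eta^{*}p^{*}_{x})=(-\Delta)^{s}\bigl(\tilde P\overline\eta-\tilde P\eta^{*}\bigr)$ on $[0,T]$; in Fourier its $(0,j)$-coefficient equals $-i\sum_{k_1\ne 0,\,k_2}\tilde k_1\,\eta^{*}_{(k_1,k_2)}\overline{p^{*}_{(k_1,k_2-j)}}$, so the transverse content of $\eta^{*}p^{*}_{x}$ is produced precisely by (streamwise mode of $\eta^{*}$)$\times$(mixed mode of $p^{*}$) interactions. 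The aim is then to show that this family of bilinear identities, together with $p^{*}=-\gamma\zeta^{*}$ and the coupled governing and adjoint equations, is incompatible with $p^{*}$ (hence $\zeta^{*}$) carrying any nonzero mixed-mode component, which would force $\zeta^{*}$ to be $y$-independent — a contradiction. The obstacle here is the same in spirit as in the sufficiency direction: turning ``the relevant bilinear/feedback quantity is generically nonzero and a $y$-dependent perturbation is generically unprofitable'' into an unconditional statement; I expect both can be made rigorous under a mild genericity hypothesis on $(v,\overline\eta)$, but not with the tools used so far, which is presumably why the result is stated as a conjecture.
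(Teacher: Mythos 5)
First, note that the paper offers no proof of this statement: it is posed as Conjecture \ref{conj1}, supported only by the numerical experiments of Figure \ref{optimalExp3fig} and the heuristic remark that streamwise and mixed-mode activity in $\eta$ excites transverse modes in the adjoint variable through the term $\eta p_x$ in \eqref{Adjointeqn1}. There is therefore no proof in the paper to compare yours against; what can be assessed is whether your programme closes the conjecture, and by your own account it does not. The one piece that is essentially complete is the necessity direction in the unconstrained case $F_{\textrm{ad}}=L^2(0,T;L_0^2)$: from $\zeta^*=-p^*/\gamma$ and the $y$-independence of every term on the left of \eqref{Adjointeqn1}, the invertibility of $(-\Delta)^s$ on zero-mean functions does force $\overline{\eta}$ to be $y$-independent, and $v=\eta^*(\cdot,0)$ is immediate. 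That observation is correct and goes beyond what the paper records.

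The two remaining parts have genuine gaps, which you flag but do not close. For sufficiency, your decomposition into $a=\Pi\eta$, $b=\Pi^{\perp}\eta$ isolates exactly the right obstruction -- the feedback $g=\partial_x\Pi(b^2/2)$ of the $y$-dependent part onto the $y$-average -- but the smallness regime in which you dominate the cross-term $\gamma\langle\bar\zeta,g\rangle$ is not a removable technicality: section \ref{TranwaveeffectsSec} of the paper demonstrates numerically that purely transverse forcing can lower the energy of the streamwise and mixed modes through precisely this coupling, so for large data or long horizons a $y$-dependent control could in principle undercut the one-dimensional optimum, and the conjecture would then have to be rescued by the control-cost term alone. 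An unconditional argument therefore needs more than an energy estimate on $b$; nothing in Theorem \ref{existenceunThm1} or Theorem \ref{thm4.2} supplies the required strict convexity of the reduced cost transverse to the one-dimensional manifold. For the ``moreover'' clause, deriving the over-determined family of bilinear identities $\tilde P(\eta^* p^*_x)=(-\Delta)^s(\tilde P\overline{\eta}-\tilde P\eta^*)$ from $\tilde Pp^*=0$ is correct, but the decisive step -- showing these identities are incompatible with $p^*$ carrying nonzero mixed-mode content -- is exactly the content of the conjecture and is left as an ``aim,'' conditional on an unstated genericity hypothesis. As it stands, the proposal is a plausible reduction of the conjecture to two open quantitative questions, not a proof; this is consistent with the authors' decision to state the result as a conjecture rather than a theorem.
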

The presence of transverse modes in the optimal control even if none are present in $v$ or $\overline{\eta}$ may at first appear unusual, the system of transverse modes decouples partially, and the transverse modes in the optimal control influence the dynamics of the streamwise and mixed modes through the nonlinearity. We investigate the effect of transverse mode forcing in more detail in the next section, analysing the response of the interface energy to spanwise blowing and suction patterns. We also note that, as before, the optimal states and controls appear to be close to proportional for the majority of $[0,T]$ (see Supplementary Movies 2 and 3).

In the situation when $F_{\textrm{ad}}$ is a strict subset of $L^2(0,T;L_0^2)$, the numerical procedure proceeds exactly as outlined above, but with projections of the updated controls onto $F_{\textrm{ad}}$ at each iteration; this is viable due to the linearity of the adjoint equation \eqref{Adjointeqn1} in $p$. We note that both Theorem \ref{existenceunThm1} and Theorem \ref{thm4.2} are valid without the zero-average restriction, i.e. taking $F_{\textrm{ad}} \subseteq L^2(0,T;L^2)$, where the latter space is given an appropriate norm. We repeated a number of the above numerical experiments in this setting, allowing controls in $L^2(0,T;L^2)$. In this case, the projection $P_{\bm{0}}$ is removed from the adjoint equation. We found that the optimal controls in this space caused large drifts in the spatial average of the solution. Physically, this would require a large reservoir of fluid, and drastic modification of the average film height could result in dewetting or leaving the thin film regime altogether.

\section{Transverse mode effects\label{TranwaveeffectsSec}}

In this section, we briefly examine the extent to which controlled transverse modes can affect the streamwise and mixed modes through the nonlinear coupling -- see the ODE system of Fourier coefficients \eqref{fouriercoeffsfull1}. In this way, purely transverse controls (such as those studied in section \ref{opttranssubsec1}) may be thought of as indirect controls on the full system. This may be useful in physical situations where it is easier to force (and/or observe) the transverse modes than the other components of the flow. Furthermore, at the weakly nonlinear level, the transverse component of the problem is linear and diagonal, thus can be broken down into a set of 1D ODEs, and its control is well-studied (see section \ref{opttranssubsec1}). 

We focus on the vertical film setup, $\kappa = 0$, which has been studied extensively in \cite{tkp2018}. In this case, the transverse modes of uncontrolled solutions are damped for any choice of $Q$ with chaotic dynamics emerging for sufficiently large $L_1$ and $L_2$. In \cite{tkp2018}, the authors found that the time-averaged energy density behaves as
\begin{equation}\label{KS2Dtkpbound} \llangle \eta \rrangle := \lim_{T \rightarrow \infty} T^{-1/2} \| \eta \|_{L^2(0,T;L_0^2)}  \approx 1 ,
\end{equation}
for sufficiently large length scales; the quantity $E_{L,\alpha}$ considered in \cite{tkp2018} is related to $\llangle \eta \rrangle^2$ by the factor of $L_1^{-1}L_2^{-1}$ due to our definition of $L_0^2$-norm (\ref{Hsnorm1aaa}a). The estimate \eqref{KS2Dtkpbound} is additionally found numerically for the 1D KSE \eqref{KSEintro}, although current analytical bounds are not sharp. The goal of this section is to determine if purely transverse controls may be used to decrease $\llangle \eta \rrangle$ below $1$, i.e. make the fluid interface less energetic on average. Another appealing property of a controlled system would be the regularisation of chaos.

The appropriate choice of desired state for this study is $\overline{\eta} = (I - \tilde{P}_{\Sigma}) \eta + \tilde{\psi}(y,t)$, where $\Sigma$ is the set of transverse Fourier modes which we can force ($\tilde{P}_{\Sigma}$ is a projection onto these modes) and $\tilde{\psi}$ is a real-valued function of the modes in $\Sigma$ alone, with the property that $\tilde{\psi}_{-k_2}$ is the complex conjugate of $\tilde{\psi}_{k_2}$. Transverse modes which are not included in $\Sigma$ decay exponentially for the vertical film case under consideration here, thus we do not take them in our initial conditions to prevent any transient effects. The derivation of optimal controls for this problem is provided in appendix \ref{appendixdericoptlinear}, generalising the controls used in section \ref{opttranssubsec1}. Rather than specifying any particular control, we assume (sub-optimal) controls are applied so that the desired state is achieved exactly for all time, i.e. $\tilde{\eta} = \tilde{\psi}$. This is reasonable since we have full reachability and controllability for the individual transverse modes; the ODEs \eqref{Fouriercoefftranssystem1earlier} allow explicit construction of a control $\tilde{\zeta}$ for a given state $\tilde{\eta}$ which varies continuously from $\tilde{P}v$ to $\tilde{\psi}$. With this, we may focus on the flow response to transverse modes with a fixed amplitude.

\begin{figure}
\centering
\begin{subfigure}{2.9in}
\caption{Time-averaged energy density of $\eta - \tilde{\eta}$.} 
\includegraphics[width=2.9in]{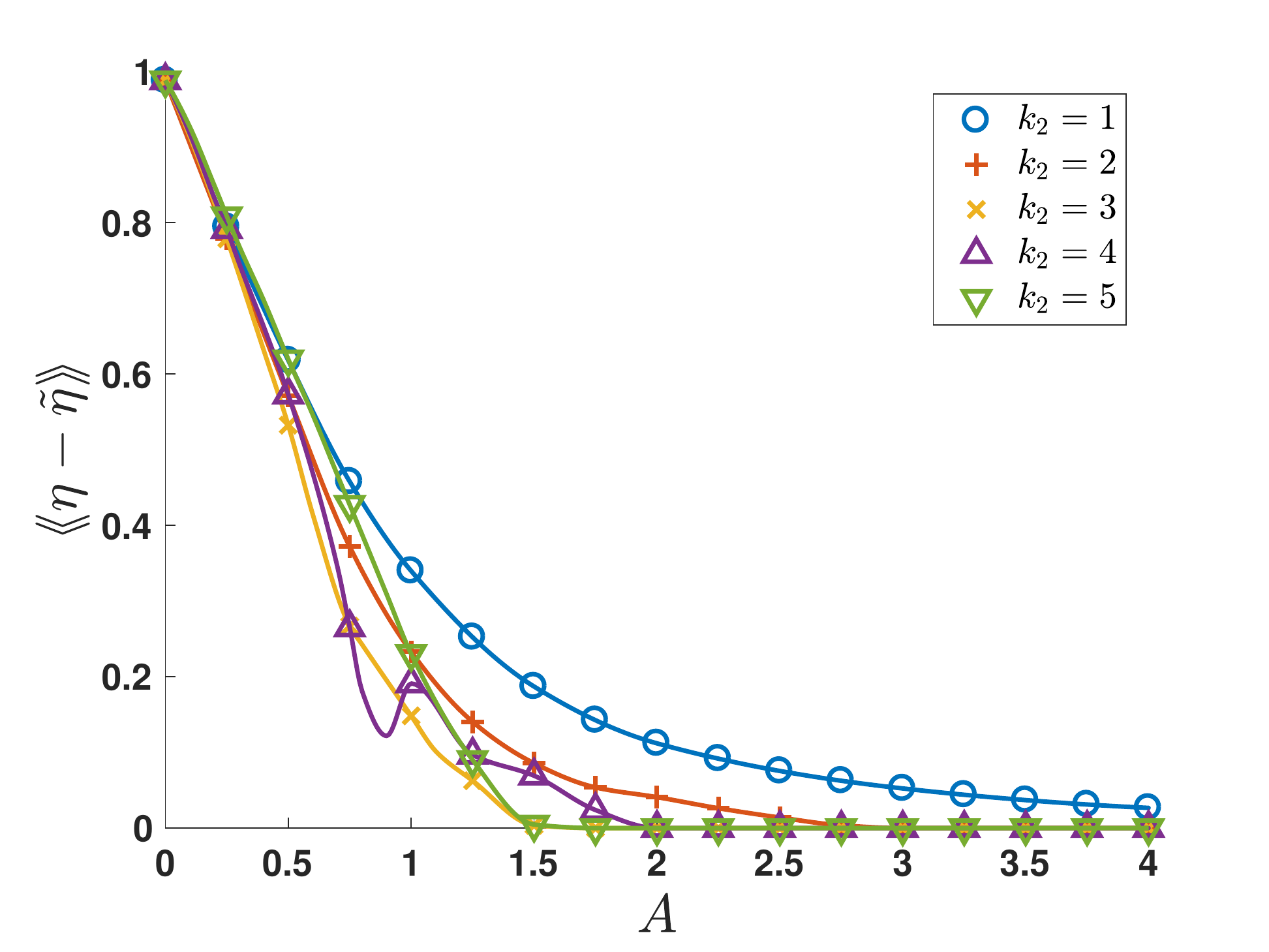}
\end{subfigure}
\begin{subfigure}{2.9in}
\caption{Time-averaged energy density of $\eta$.} 
\includegraphics[width=2.9in]{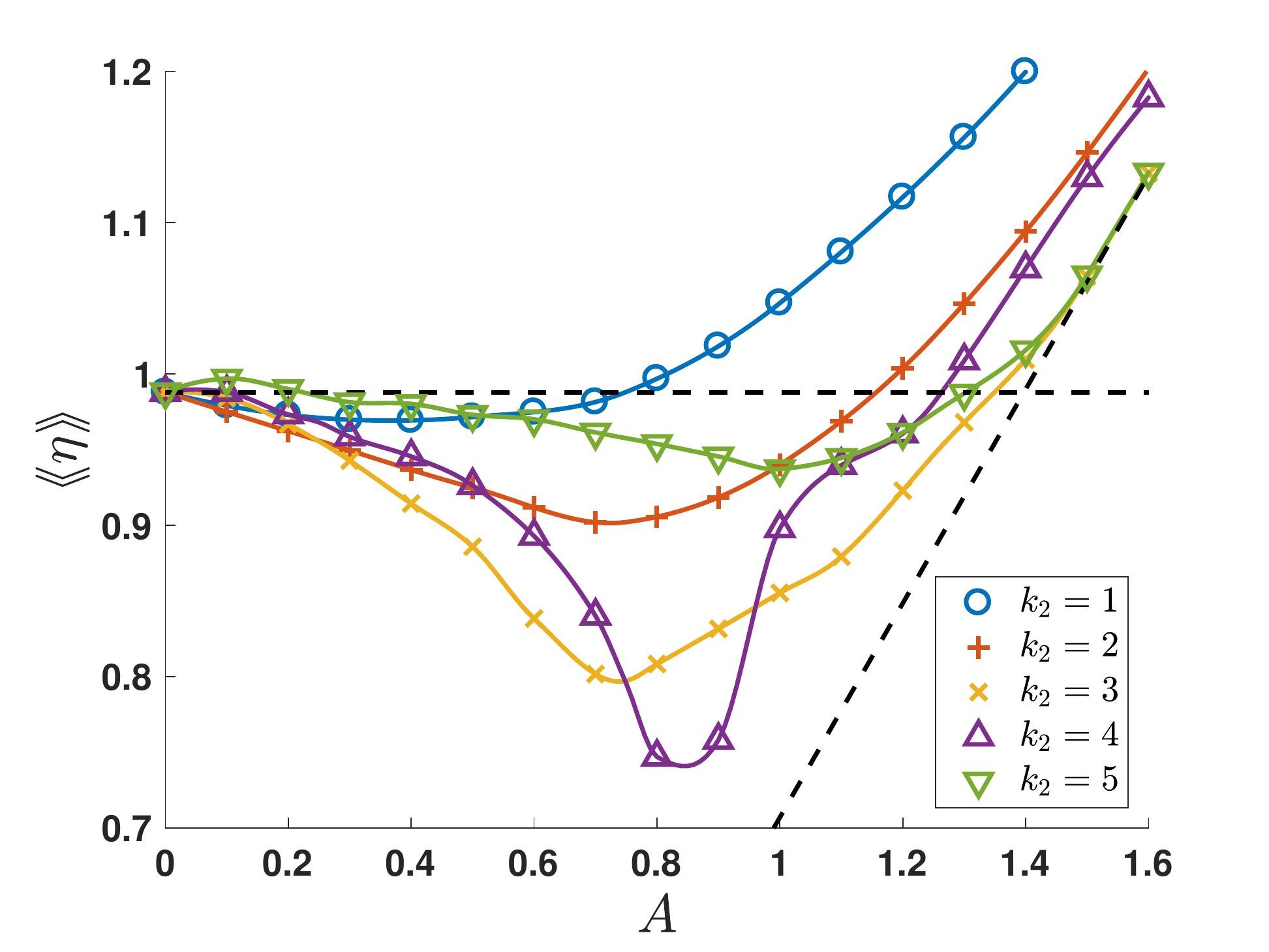}
\end{subfigure}
\caption{Effect of transverse waves on the time-averaged energy density of the full solution and its projection onto the streamwise and mixed modes. Panel (a) shows the time-averaged energy in the streamwise and mixed modes plotted against the amplitude $A$ for a range of $k_2$, and panel (b) considers the full solution. In the latter, the horizontal dashed line corresponds to the average energy in the case of $A=0$, and the diagonal dashed line corresponds to the energy in the transverse wave, $A/\sqrt{2}$. The difference between a line in panel (b) and the diagonal dashed line gives the corresponding line in panel (a). Note that the markers are used to differentiate between the different cases, and do not correspond directly to the datapoints from numerical simulations.}\label{nontrivialoptdesiredstate}
\end{figure}

We consider the simple choice of $\tilde{\psi} = A\sin( \tilde{k}_2 y )$ for an amplitude $A\geq 0$ and transverse wavenumber $\tilde{k}_2$, only containing contributions from the $(0,\pm k_2)$-modes. In our numerical simulations, we fix domain lengths $L_1 = 120$, $L_2 = 30$, in which case the uncontrolled system evolves chaotically. Additionally, we take random (real-valued) initial conditions with unstable streamwise and mixed modes so that the solution enters the global attractor rapidly. As noted above, we do not include transverse modes other than $\pm k_2$ in the initial condition, a choice which is justified by the numerical results themselves. The time-averaged energy density for both $\eta - \tilde{\eta}$ and $\eta$ are plotted in Figure \ref{nontrivialoptdesiredstate} for $k_2 = 1,\ldots,5$ and a range of $A$. These time-averages are approximated by averages over a large time interval $[T_1,T_2]$ after the solution has entered the global attractor (see \cite{tkp2018} for the details). Panel (a) shows that an increase in $A$ results in decay of the energy density of $\eta - \tilde{\eta}$ (this decrease is monotonic in most cases). Furthermore, we find that the chaotic dynamics are eventually regularised for a sufficiently large value of $A$, and the transverse wave eventually becomes nonlinearly stable where the line in Figure \ref{nontrivialoptdesiredstate}(a) touches down on the $A$-axis; these critical values and the behaviours of the energy densities are heavily dependent on $k_2$. For $k_2 = 1$, the chaotic dynamics are regularised for $A \approx 5.8$ and the transverse sine-wave becomes nonlinearly stable at $A=10.22$. For $k_2 = 2$, this happens for much lower amplitudes, with chaos being regularised for $A$ just beyond $2$, and the trivial solution in the streamwise and mixed modes, $\eta - \tilde{\eta} = 0$, becoming stable at $A =  2.86$. Panel (b) shows the more surprising result that a small amplitude transverse wave can lower the total average energy of the system (however, this does not account for any control costs). Due to the form of the nonlinear coupling, we postulate that very high frequency transverse waves will have little effect on the system energy and dynamics (as is confirmed by linear results to follow). We find that the choices of $k_2 = 3, 4$ are particularly successful in decreasing $\llangle \eta \rrangle$; these results suggest the existence of an optimal mode of control and amplitude (if $\tilde{\psi}$ is restricted to a steady modal wave) to minimise the system energy, dependent on $\kappa$, $L_1$ and $L_2$. From our time-dependent simulations, we also observed that larger values of $A$ also delayed the onset of chaos, below the threshold of regularised dynamics.

\begin{figure}
\centering
\includegraphics[width=2.9in]{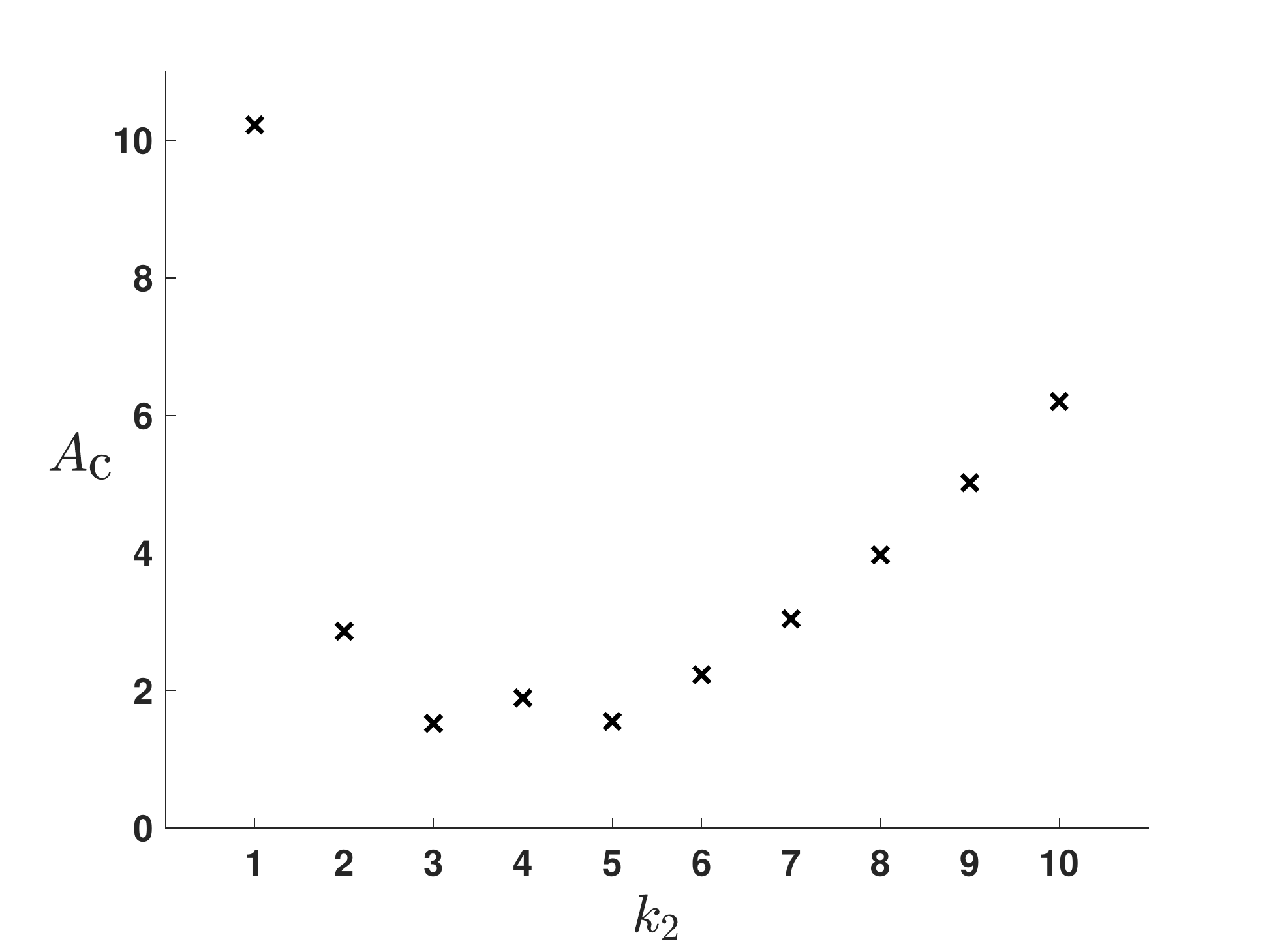}
\caption{Critical amplitude $A_{\textrm{c}}$ which separates the linearly stable and unstable regimes plotted against $k_2$.
}\label{A_crit_vs_k2}
\end{figure}
The numerical results in Figure \ref{nontrivialoptdesiredstate}(a) show that full nonlinear stability of the transverse wave occurs at the lowest values of $A$ for $k_2 = 3,5$. We now use linear theory to investigate the relationship between this critical value $A_{\textrm{c}}$ and $k_2$. In terms of the infinite-dimensional system of Fourier modes, the linearisation of \eqref{fouriercoeffsfull1} about $\eta = \tilde{\eta} = \tilde{\psi}$ for a general $\tilde{\psi}(y,t)$ is
\begin{equation}\label{fouriercoeffsfull23}\frac{\mathrm{d}}{\mathrm{d}t}\eta_{\bm{m}} = -  i\tilde{m}_1 \sum_{l_2\in\mathbb{Z}} \eta_{(m_1, m_2 - l_2)} \tilde{\psi}_{l_2} + \left[ (1-\kappa) \tilde{m}_1^2 - \kappa \tilde{m}_2^2 - |\bm{\tilde{m}}|^4  \right] \eta_{\bm{m}},\end{equation}
for $\bm{m} \in \mathbb{Z}^2$ with $m_1 \neq 0$, which, for the case of $\kappa = 0$ and $\tilde{\psi} = A\sin( \tilde{k}_2 y )$ becomes
\begin{equation}\label{fouriercoeffsfull24}\frac{\mathrm{d}}{\mathrm{d}t}\eta_{\bm{m}} = \frac{A \tilde{m}_1}{2}  [\eta_{(m_1, m_2 + k_2)} - \eta_{(m_1, m_2 - k_2)} ]+ \left[  \tilde{m}_1^2 - |\bm{\tilde{m}}|^4  \right] \eta_{\bm{m}}.\end{equation}
The stability of the above infinite-dimensional linear system can be determined by considering the stability of sufficiently large finite-dimensional truncation. With this, we may compute the critical value $A_{\textrm{c}}$, which separates the unstable and stable regimes, as a function of $k_2$. The results are plotted in Figure \ref{A_crit_vs_k2}, and the linear theory agrees with the critical values of the amplitude found in our nonlinear time-dependent simulations when the transverse wave becomes attractive to all initial conditions. Although not plotted, $A_{\textrm{c}}$ for $k_2 \geq 10$ is monotonically increasing, as suggested by the form of the nonlinearity. Although not as informative as Figure \ref{nontrivialoptdesiredstate}, the results shown in Figure \ref{A_crit_vs_k2} are a good predictor to whether the flow has a weak or strong response to a given frequency.

The consideration of more general $\tilde{\psi}$ and different parameter choices is beyond the scope of our study. Transverse controls $\tilde{\zeta}$ allow us to attain any such transverse wave state (which then may be thought of as a control in its own right), however, the reachability and controllability of the streamwise and mixed mode system for the control $\tilde{\psi}$ is unclear, in which the control acts multiplicatively through the nonlinear coupling, rather than additively. We believe that these results may have applications to other control problems for multidimensional flows with a dominant direction, for example in aerodynamics.

\section{Conclusions\label{CONC1}}

In this paper we considered the distributed control of a Kuramoto--Sivashinsky equation for gravity-driven thin film flow overlying or hanging from a 2D flat substrate. Blowing and suction controls applied at the substrate surface appear as a forcing term in the weakly-nonlinear evolution equation. 

For hanging films ($\kappa < 0$), optimal controls (which are constant in the streamwise direction) were constructed in section \ref{opttranssubsec1} to impede the exponential growth of linearly unstable transverse modes; these controls were applied successfully in numerical simulations. This spanwise instability is physical, predicting the formation of rivulets which may be a precursor to dripping for certain parameter regimes. In a non-idealised situation, it may be much more difficult to prevent such an instability from developing. It may also be the case that the path to dripping may take a different route upon the control of this initial instability, bypassing rivulet formation -- will controls merely delay the onset of dripping? It is clear that weakly nonlinear analysis alone will not give an acceptable answer as the processes of rivulet formation and dripping are inherently nonlinear. The use of stabilising electric fields to prevent the dripping instability (blow-up) of the rivulets is currently under investigation by the authors. Furthermore, it may be viable to use the constructed optimal controls to develop a feedback controls for the manifestation of the same spanwise instability in the models higher up the heirarchy, for example the Benney equation \eqref{benneyqequation1}. This instability appears nonlinearly in these models, and thus the explicit construction of optimal controls is not possible, and iteration method are computationally expensive.

In section \ref{FOCref}, we considered a more general class of controls which varied spatially in both the streamwise and spanwise directions, yet restricted to zero spatial average; without this assumption the spatial average of the solutions was seen to vary greatly. A detailed proof of existence of an optimal control was given, outlining the general strategy that may be used for similar problems. Using the adjoint formulation, we constructed a forward--backward sweeping algorithm for the problem, and successfully applied it for numerical experiments. The optimal controls for the problem under consideration are initial condition (and parameter) dependent, a large number of iterations are required for the numerical algorithm to converge to the optimiser, and a large amount of data must be stored (unless checkpointing is used, which slows the algorithm). Thus, it is difficult to construct a (near) optimal control in real time if the problem is nonlinear and multidimensional, as in our case, and so it is unfeasible to use such controls in applications. However, the computed optimal controls indicate the efficacy of proportional controls, where $\zeta = - \alpha (\eta - \overline{\eta})$; a study of point-actuated controls by the authors will be presented elsewhere.

In section \ref{TranwaveeffectsSec}, we returned to the study of purely spanwise controls in a non-optimal setting. We focused on the vertical film case, for which controls are not required to ensure bounded solutions. With extensive numerical simulations, we found that small amplitude sinusoidal transverse waves yielded a decrease in the average energy of the fluid interface, with large amplitude waves being nonlinearly stable, regularising the chaotic dynamics.

A large motivation for this paper and current work by the authors is in the construction of controls for other systems in the hierarchy of models, with the ultimate goal of using control methodologies derived for the KSE to control the Navier--Stokes equations; from inducing recirculation regions to improve heat transfer to stabilising the exact Nusselt solution and preventing dripping for hanging film arrangements. This will also serve as a test for the weakly nonlinear models in describing the dynamics of the full system.

%

%
%
%
%
%
%
%
%

\appendix

\section{Derivation of optimal transverse control\label{appendixdericoptlinear}}

We now give a brief derivation of the optimal control for an LQR tracker problem for the transverse system \eqref{transversesystem1}. We consider the general desired state
\begin{equation}\overline{\eta} = (I -\tilde{P}_{\Sigma}) \eta + \tilde{\psi}(y,t), \qquad \tilde{\psi}(y,t) = \sum_{k_2 \in \Sigma} \tilde{\psi}_{k_2}(t) e^{i\tilde{k}_2 y}.\end{equation}
Here, $\tilde{P}_{\Sigma}$ is a projection onto a subset of the transverse Fourier modes, and $\tilde{\psi}$ is a real-valued function of the modes in $\Sigma$ alone, with the property that $\tilde{\psi}_{-k_2}$ is the complex conjugate of $\tilde{\psi}_{k_2}$; importantly then, $\Sigma$ must satisfy the property that if $k_2\in\Sigma$, then $-k_2\in\Sigma$. The term ``tracker" is used to indicate that the desired state is possibly non-trivial and time-dependent. In section \ref{opttranssubsec1}, this theory was employed for $\Sigma = \Xi$ and $\tilde{\psi} = 0$, and the full generalisation gives a viable choice of control to be used in conjunction with the results in section \ref{TranwaveeffectsSec}.

The linearity of the problem allows us to build optimal controls for each mode. From the $k_2^{\textrm{th}}$ component of the cost functional, we have the the Hamiltonian
\begin{equation}H_{k_2} =  |\tilde{k}_2|^{2s} \left|\tilde{\eta}_{k_2}(t) - \tilde{\psi}_{k_2}(t) \right|^2 + \gamma |\tilde{\zeta}_{k_2}(t)|^2 + p_{k_2} \left[-(\kappa \tilde{k}_2^2 + \tilde{k}_2^4) \tilde{\eta}_{k_2} +  \tilde{\zeta}_{k_2} \right]\end{equation}
where $p_{k_2}$ is the adjoint variable for the $k_2^{\textrm{th}}$ mode (with complex conjugate $p_{-k_2}$). Then, from Hamilton's equations, we have a two point boundary value problem
\begin{align} \label{optimalcontrol2pointbp1}
 \frac{\mathrm{d}}{\mathrm{d}t} \tilde{\eta}_{k_2}  = \frac{\partial H_{k_2}}{\partial p_{k_2}} & = - (\kappa \tilde{k}_2^2 + \tilde{k}_2^4) \tilde{\eta}_{k_2} +  \tilde{\zeta}_{k_2},  \\ \label{optimalcontrol2pointbp2}
 - \frac{\mathrm{d}}{\mathrm{d}t} p_{k_2}  = \frac{\partial H_{k_2}}{\partial \tilde{\eta}_{k_2}} & =  |\tilde{k}_2|^{2s} \left(\tilde{\eta}_{-k_2}(t) - \tilde{\psi}_{-k_2}(t) \right)- p_{k_2} (\kappa \tilde{k}_2^2 + \tilde{k}_2^4) ,  \\ \label{optimalcontrol2pointbp3}
 0 = \frac{\partial H_{k_2}}{\partial \tilde{\zeta}_{k_2}} & = \gamma  \tilde{\zeta}_{-k_2} + p_{k_2} \quad \Rightarrow \quad \tilde{\zeta}_{-k_2} = - \frac{p_{k_2}}{\gamma},
\end{align}
where the boundary conditions are
\begin{equation} \label{appenbc1} \tilde{\eta}_{k_2}(0) = v_{(0,k_2)} \qquad p_{k_2}(T) = |\tilde{k}_2|^{2s} \left(\tilde{\eta}_{-k_2}(T) - \tilde{\psi}_{-k_2}(T) \right).
\end{equation}
The final time boundary condition for the adjoint is found by differentiating the cost functional with respect to $\tilde{\eta}_{k_2}(T)$. Taking the complex conjugate of \eqref{optimalcontrol2pointbp2} and \eqref{optimalcontrol2pointbp3} gives the most useful form of the equations. To solve this two point boundary value problem, we make the ansatz that
\begin{equation}\label{ansatzappen1}p_{-k_2}(t) = - \gamma r_{k_2}(t)\tilde{\eta}_{k_2}(t) +  q_{k_2}(t).\end{equation}
Taking the time derivative of this and equating with the complex conjugate of \eqref{optimalcontrol2pointbp2}, after manipulations we arrive at
\begin{align}& \gamma\left[  -   \frac{\mathrm{d}}{\mathrm{d}t}r_{k_2}  - r_{k_2}^2 + 2 r_{k_2} (\kappa \tilde{k}_2^2 + \tilde{k}_2^4)   + \frac{|\tilde{k}_2|^{2s}}{\gamma} \right] \tilde{\eta}_{k_2} \nonumber \\
& \qquad\qquad\qquad\qquad\qquad\qquad\qquad\qquad = \left[ - \frac{\mathrm{d}}{\mathrm{d}t}q_{k_2}  + (\kappa \tilde{k}_2^2 + \tilde{k}_2^4 - r_{k_2}) q_{k_2} + |\tilde{k}_2|^{2s} \tilde{\psi}_{k_2}   \right]. \end{align}
Then we choose $r_{k_2}$ to satisfy \eqref{ricattieqn1}, and $q_{k_2}$ to satisfy
\begin{equation}\frac{\mathrm{d}}{\mathrm{d}t}q_{k_2} =  (\kappa \tilde{k}_2^2 + \tilde{k}_2^4 -  r_{k_2}) q_{k_2}  + |\tilde{k}_2|^{2s} \tilde{\psi}_{k_2} ,\qquad q_{k_2}(T) =  - |\tilde{k}_2|^{2s} \tilde{\psi}_{k_2}(T)\end{equation}
where the final time boundary condition is deduced from \eqref{appenbc1} and \eqref{ansatzappen1}. With these, the optimal control is given by $\tilde{\zeta}_{k_2}^* = r_{k_2}\tilde{\eta}_{k_2}^* - q_{k_2}/\gamma.$

\section{Estimate for proof of existence of optimal control\label{Appendixboundderiv}} 

Here we give a derivation of inequality \eqref{importantbound} used in section \ref{FOCref}. Multiplying \eqref{controlled2dks} by $\eta$ and taking the spatial average gives the energy equation
\begin{equation}\label{appB1} \frac{1}{2} \frac{\mathrm{d}}{\mathrm{d}t} \| \eta \|_{L_0^2}^2 = (1-\kappa)  \| \eta_x \|_{L_0^2}^2 - \kappa \| \eta_y \|_{L_0^2}^2 - \| \eta \|_{H_0^2}^2 + \langle \zeta, \eta \rangle_{L_0^2}
\end{equation}
where we have used integration by parts. Furthermore, we may bound
\begin{equation} (1-\kappa)  \| \eta_x \|_{L_0^2}^2 - \kappa \| \eta_y \|_{L_0^2}^2 - \frac{1}{2} \| \eta \|_{H_0^2}^2 =  \sum_{ \bm{k} \in\mathbb{Z}^2\backslash\{\bm{0}\} } \left[ (1-\kappa) \tilde{k}_1^2 - \kappa \tilde{k}_2^2 - \frac{1}{2} | \bm{\tilde{k}} |^4 \right] | \eta_{\bm{k}} |^2 \\
 \leq  C_1 \| \eta \|_{L_0^2}^2, 
\end{equation}
where $C_1 = \mathbbm{1}_{ \kappa < 1 }(1-\kappa)^2/2$. With this estimate, and applications of H\"{o}lder's and Young's inequalities to the term involving the control, \eqref{appB1} yields
\begin{equation}\label{appB2}  \frac{\mathrm{d}}{\mathrm{d}t} \| \eta \|_{L_0^2}^2 \leq C_2 \| \eta \|_{L_0^2}^2 -  \| \eta \|_{H_0^2}^2 +  \| \zeta \|_{L_0^2}^2,
\end{equation}
where $C_2 = 2 C_1 + 1$. Then, using Gronwall's inequality, we may compute
\begin{align} & \frac{\mathrm{d}}{\mathrm{d}t} \left( \| \eta \|_{L_0^2}^2 e^{-C_2 t} \right) \leq  -  \| \eta \|_{H_0^2}^2 e^{-C_2 t} +  \| \zeta \|_{L_0^2}^2 e^{-C_2 t}, \nonumber \\
\Rightarrow\quad & \| \eta(t) \|_{L_0^2}^2 e^{-C_2 t}  \leq  \| v \|_{L_0^2}^2 + \int_0^t -  \| \eta(t') \|_{H_0^2}^2 e^{-C_2 t'} +  \| \zeta(t') \|_{L_0^2}^2 e^{-C_2 t'} \; \mathrm{d}t', \nonumber \\
\Rightarrow\quad & \| \eta(t) \|_{L_0^2}^2 + e^{-C_2 T}  \| \eta \|_{L^2(0,t;H_0^2)}^2   \leq  e^{C_2 T} ( \| v \|_{L_0^2}^2  +  \| \zeta \|_{L^2(0,T;L_0^2)}^2).
\end{align}
Note the dependence on $t$ in both terms on the left hand side of the final line, and the uniform bound on the right hand side. From this, we may extract the desired bound
\begin{equation} \| \eta \|_{C^0([0,T]; L_0^2)} + \| \eta \|_{L^2(0,T;H_0^2)}  \leq C (  \| v \|_{L_0^2}  +  \|  \zeta \|_{L^2(0,T; L_0^2)}),\end{equation}
where $C = e^{C_2 T/2} + e^{C_2 T}$.

%
%

\section*{Acknowledgments} R.J.T. acknowledges the support of a PhD scholarship from EPSRC. The work of S.N.G was supported by EPSRC grants EP/K034154/1, EP/L020564/1, the work of D.T.P. was supported by EPSRC grants EP/K041134 and EP/L020564, and the work of G.A.P. was supported by EPSRC grants EP/L020564, EP/L025159 and EP/L024926.

\bibliographystyle{unsrt}
\bibliography{Controlbib}
\end{document}